\documentclass[letterpaper,11pt,leqno]{article}
\usepackage{paper}
\usepackage{keytheorems}
\usepackage{xcolor}
\usepackage{tikz}
\newkeytheorem{theorem}
\newkeytheorem{theorem*}[name=theorem, numbered=false]
\newkeytheorem{lemma}

\makeatletter
\newcommand{\asslabel}[2]{\def\@currentlabel{#1}\phantomsection\label{#2}}
\makeatother

\hypersetup{pdftitle={On the vanishing content of GARP in high dimensions}}

\makeatletter
\newcommand\thankssymb[1]{\textsuperscript{\@fnsymbol{#1}}}
\makeatother

\begin{document}

\title{The Empirical Content of Revealed Preference in High Dimensions}

\author{Ian Crawford\thankssymb{1}
\thanks{\thankssymb{1}Department of Economics, University of Oxford,
Manor Road, Oxford, OX1 3UQ, United Kingdom; Nuffield College, New Road, Oxford, OX1 1NF,
United Kingdom. This research was substantially completed while visiting the Research School of Economics at the Australian National University. I am very grateful to Simon Grant and his colleagues at the RSE for their hospitality. \href{mailto:ian.crawford@economics.ox.ac.uk}{ian.crawford@economics.ox.ac.uk}}
\and Longye Tian\thankssymb{2}
%
\thanks{\thankssymb{2}Research School of Economics, The Australian National University, ACT 2601, Australia. \href{mailto:longye.tian@anu.edu.au}{longye.tian@anu.edu.au}}}

\date{}


\begin{titlepage}
\maketitle
\begin{abstract}
We examine how the empirical content of revealed preference theory depends on the dimensionality of the choice environment. While higher-dimensional choice problems may appear more demanding, we show that revealed preference restrictions become less informative. Using Selten’s Area measure, we establish that for any fixed number of observations, the empirical content of GARP converges to zero exponentially fast in the number of goods. We provide complementary proofs based on revealed preference graphs and the Afriat inequalities, and show in simulations calibrated to scanner data that the effect is quantitatively large. We also evaluate potential responses in observational and experimental settings and find that, while these can slow the rate, they do not eliminate this loss of empirical content. 
\end{abstract}

\end{titlepage}

\section{Introduction}
Revealed preference theory offers a nonparametric approach to assessing whether observed choices are consistent with utility maximisation. Its appeal lies in the fact that it imposes testable restrictions without requiring functional form assumptions. A key question is how informative these restrictions are in environments with many goods. 

One might expect that as the number of goods increases, choice problems become more complex, requiring consumers to evaluate trade-offs across many dimensions. This suggests that violations of rationality should become more frequent, and that revealed preference conditions should become more demanding. This paper shows that, in fact, the empirical content of revealed preference theory declines sharply with dimensionality.\footnote{Recent work proposes alternative measures of the restrictiveness of economic models based on their ability to approximate data (see, e.g., \cite{FudenbergGaoLiang_2026}). Our approach follows \cite{selten1991properties} in measuring empirical content as the size of the set of behaviour consistent with the model.} For any fixed number of observations, the fraction of behaviour that satisfies the Generalised Axiom of Revealed Preference (GARP) converges to one exponentially fast in the number of goods. In high-dimensional choice environments, almost all behaviour is consistent with revealed preference. The result follows from a structural observation: violations of GARP correspond to directed cycles in the revealed preference relation. We show that, in high dimensions, the configurations required for such cycles have exponentially small measure. Thus, while choice problems become more complex, the revealed preference restrictions themselves become less informative. We provide two complementary proofs of the main result: a graph-theoretic proof based on cycles in revealed preference, and a parallel proof based on the Afriat linear program.

Our analysis complements recent work developing general frameworks for revealed preference (e.g. \cite{NishimuraOkQuah2017}) and empirical measures of violations (e.g. \cite{EcheniqueLeeShum2011} and \cite{DeanMartin2016}). It focuses on how the empirical content of revealed preference conditions varies with dimensionality. This is particularly relevant for modern datasets such as scanner data, and for experimental designs with many goods.

\section{Illustration}
\cite{selten1991properties} defines an ``Area theory'' as one which predicts a subset of the set of all feasible outcomes.\footnote{He contrasts this with ``point'' theories which predict single outcomes and ``distribution'' theories which predict distributions of outcomes.} The restrictiveness of such a theory is captured by its Area. This measures the size of the predicted set relative to the set of all feasible outcomes. An Area theory is more restrictive the smaller the Area. Following \cite{selten1991properties}, \cite{beattyHowDemandingRevealed2011} define the Area of the theory of competitive neoclassical consumer choice as the size of the set of choices which satisfy GARP relative to the size of the set of choices which exhaust the consumer's budget constraints. 

\begin{figure}[!ht]
\begin{center}
\caption{The Area of rational choice in a two-good environment.}
\resizebox{150pt}{150pt}{
\begin{tikzpicture}

  \draw[->] (0,0) -- (6,0) node[right] {$x^1$};
  \draw[->] (0,0) -- (0,6) node[above] {$x^2$};

  \draw (0,3) -- (4,0);
  \draw (0,4) -- (3,0);
  
  \node[below] at (4,0) {\scriptsize{4}};
  \node[below] at (3,0) {\scriptsize{3}};
  \node[left] at (0,4) {\scriptsize{4}};
  \node[left] at (0,3) {\scriptsize{3}};

  \draw [very thick, blue!35] (0,3) -- (12/7,12/7);
  \draw [very thick, red!35] (3,0) -- (12/7,12/7);

   \node[below] at (3,1.5) {{$b^{\prime\prime}$}};
   \node[below] at (0.5,2.5) {\textcolor{blue!95}{$b^{\prime}$}};

   \node[below] at (1.5,3) {{$a^{\prime\prime}$}};
   \node[left] at (2.5,0.5) {\textcolor{red!95}{$a^{\prime}$}};

\end{tikzpicture}
}
\note[Note]{The Area is given by one minus the product of lengths of line segments $a^\prime$ and  $b^\prime$ relative to the lengths of their respective budget constraints.}\label{fig:2D_Selten}

\end{center}
\end{figure} 

To illustrate, consider Figure  \ref{fig:2D_Selten} , which shows two budget constraints in a two-good consumption space. In this case choices on both of the segments $a^\prime$ and $b^\prime$ violate GARP, while all other combinations are consistent with rational choice. The Area of rational choice in this 2-good environment is $A_2  =\frac{40}{49}\approx 0.82$. 

\begin{figure}[!ht]
\begin{center}
\caption{The Area of rational choice in a three-good environment.}
\resizebox{200pt}{200pt}{
\begin{tikzpicture}[x={(-0.5cm,-0.5cm)}, y={(1cm,0cm)}, z={(0cm,1cm)}, >=stealth]

    \draw[->] (0,0,0) -- (6,0,0) node[anchor=north east]{$x^3$};
    \draw[->] (0,0,0) -- (0,5,0) node[anchor=north west]{$x^1$};
    \draw[->] (0,0,0) -- (0,0,5) node[anchor=south]{$x^2$};

    \node[anchor=south west] at (0,0,3){\scriptsize{3}};
    \node[anchor=south west] at (0,4,0){\scriptsize{4}};
    \node[anchor=east] at (5,0,0){\scriptsize{5}};

    \node[anchor=south west] at (0,0,4){\scriptsize{4}};
    \node[anchor=south west] at (0,3,0){\scriptsize{3}};
    \node[anchor=east] at (2,0,0){\scriptsize{2}};
    
    \draw[solid] (0,0,3) -- (0,4,0) -- (5,0,0) -- (0,0,3);    
    \draw[solid] (0,0,4) -- (0,3,0)-- (2,0,0)-- (0,0,4);
    
    \filldraw[
        fill=blue!25,
        draw=blue!25,
        thick,
        opacity=0.8
    ] (0,0,3) -- (0,12/7,12/7) -- (5/7,0,18/7) -- (0,0,3);    
    \filldraw[
        fill=red!25,
        draw=red!25,
        thick,
        opacity=0.8
    ] (0,12/7,12/7) -- (0,3,0) -- (2,0,0)--(5/7,0,18/7) -- (0,12/7,12/7);

    \node[anchor=east] at (2,2,2){\scriptsize{\textcolor{red!95}{$a^{\prime}$}}};
    \node[anchor=west] at (1,1,4){\scriptsize{{$a^{\prime\prime}$}}};

    \node[anchor=east] at (1,1,3){\scriptsize{\textcolor{blue!95}{$b^{\prime}$}}};
    \node[anchor=west] at (4,0,2){\scriptsize{{$b^{\prime\prime}$}}};
\end{tikzpicture}
}
\note[Note]{The Area is given by one minus the product of the areas of the polygon patches $a^\prime$ and  $b^\prime$ relative to the areas of their respective budget constraints.}\label{f:3D_Selten}
\end{center}
\end{figure} 

Now consider adding a third good. Figure \ref{f:3D_Selten} shows the resulting three-dimensional constraints. Choices on both of the patches $a^\prime$ and $b^\prime$ violate GARP. The Area increases to $A_3 \approx 0.95$. Thus, as the dimensionality increases, the empirical content of rational choice is reduced as a larger fraction of possible choices is consistent with GARP.  In this example $A_3>A_2$. In this paper we show that in general $A_K \rightarrow 1$ as $K \rightarrow \infty $ and we establish a bound on the rate. 

\section{Preliminaries}\label{s:prelims}
\subsection{Revealed Preference}
A consumer chooses among $K\ge 2$ goods over $T\ge 2$ observations. At observation $i\in [T]:= \{1,\ldots, T\}$, the consumer faces prices $\bm{p}_i \in \mathbb{R}^K_{++}$ and has income $m_i>0$, and selects a bundle $\bm{x}_i \in \mathbb{R}_+^K$ satisfying $\bm{p}_i\cdot \bm{x}_i = m_i$.

\begin{definition}[Revealed preference and GARP]
Bundle $\bm{x}_i$ is \emph{directly revealed preferred} to $\bm{x}_j$, written $\bm{x}_i R^0 \bm{x}_j$, if $\bm{p}_i \cdot \bm{x}_j \le \bm{p}_i \cdot \bm{x}_i$; strictly so, written $\bm{x}_i P^0 \bm{x}_j$, if the inequality is strict. Let $R$ denote the transitive closure of $R^0$. The dataset $\{\bm{p}_i, \bm{x}_i\}_{i\in [T]}$ satisfies the \emph{Generalised Axiom of Revealed Preference} (GARP) if $\bm{x}_i R \bm{x}_j \implies \text{not } \bm{x}_j P^0 \bm{x}_i$, for all  $i,j\in [T]$.
\end{definition}

GARP, or the feasibility of an equivalent linear programming problem, is a necessary and sufficient condition for a dataset of prices and quantities to be rationalised by a set of preferences represented by a well-behaved utility function.

\begin{theorem*}[Afriat's Theorem]\label{t:Afriats_Theorem}(Afriat (1967), Diewert (1973), Varian (1982)). The following statements are equivalent:
\begin{enumerate}
\item there exists a concave, monotonic, continuous, non-satiated utility function $u(\bm{x})$ which rationalises the data $\{\bm{p}_i,\bm{x}_i\}_{i\in[T]}$
\item there exist real numbers $\{U_i,\lambda_i \}_{i\in[T]}$ which satisfy the inequalities  $U_i \le U_j + \lambda_j \bm{p}_j \cdot (\bm{x}_i - \bm{x}_j)$ and $\lambda_i >0$ for all $i,j\in [T].$
\item the data satisfy GARP.
\end{enumerate}

\end{theorem*}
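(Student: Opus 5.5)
I would prove Afriat's Theorem by the cycle of implications (1) $\Rightarrow$ (3) $\Rightarrow$ (2) $\Rightarrow$ (1). The step (3) $\Rightarrow$ (2) is the substantive one; the other two are short.

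\textbf{(1) $\Rightarrow$ (3).} Suppose $u$ rationalises the data, so that $\bm{x}_i$ maximises $u$ on $\{\bm{x}:\bm{p}_i\cdot\bm{x}\le m_i\}$.
\begin{itemize}
\item If $\bm{x}_i R^0 \bm{x}_j$, then $\bm{x}_j$ is affordable at observation $i$, so $u(\bm{x}_i)\ge u(\bm{x}_j)$.
\item If $\bm{x}_i P^0 \bm{x}_j$, then $\bm{x}_j$ costs strictly less than $m_i$. By continuity and local non-satiation, some bundle near $\bm{x}_j$ is still affordable and strictly better than $\bm{x}_j$. Hence $u(\bm{x}_i)>u(\bm{x}_j)$.
\end{itemize}
Now suppose $\bm{x}_i R\bm{x}_j$ and $\bm{x}_j P^0\bm{x}_i$. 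Chaining the weak inequalities along the $R$-path gives $u(\bm{x}_i)\ge u(\bm{x}_j)$, while the strict step gives $u(\bm{x}_j)>u(\bm{x}_i)$. This is a contradiction, so GARP holds.

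\textbf{(2) $\Rightarrow$ (1).} Given the numbers $U_j,\lambda_j$, define
\[u(\bm{x})=\min_{j\in[T]}\{U_j+\lambda_j\bm{p}_j\cdot(\bm{x}-\bm{x}_j)\}.\]
\begin{itemize}
\item As a minimum of affine functions with strictly positive gradients, $u$ is concave, continuous and strictly increasing. Strict monotonicity gives non-satiation.
\item The Afriat inequalities give $u(\bm{x}_i)=U_i$, since the $j=i$ term equals $U_i$ and no term is smaller.
\item If $\bm{p}_i\cdot\bm{x}\le \bm{p}_i\cdot\bm{x}_i$, then $u(\bm{x})\le U_i+\lambda_i\bm{p}_i\cdot(\bm{x}-\bm{x}_i)\le U_i$.
\end{itemize}
So $\bm{x}_i$ maximises $u$ on its budget set, and $u$ rationalises the data.

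\textbf{(3) $\Rightarrow$ (2).} This is the main obstacle. I would follow Varian's constructive argument.

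First, write $a_{ij}=\bm{p}_i\cdot(\bm{x}_j-\bm{x}_i)$, so that $\bm{x}_i R^0\bm{x}_j$ iff $a_{ij}\le 0$. Let $\sim$ be the equivalence relation $\bm{x}_i R\bm{x}_j$ and $\bm{x}_j R\bm{x}_i$.

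Second, use GARP to show that within any $\sim$-class every edge has $a_{ij}=0$. Such an edge lies on an $R$-cycle, so $a_{ij}<0$ would be a strict $P^0$ step closing a revealed-preference path, which GARP forbids.

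Third, the quotient of $R$ by $\sim$ is acyclic, so I fix a topological order of the classes in which preferred classes come later. I then assign $(U,\lambda)$ class by class, inducting upward along this order.
\begin{itemize}
\item Within a class, I give all members a common $U$.
\item For a new class $C$, I set
\[U_C=\min_{i\in C,\ j\ \text{earlier}}\bigl\{U_j+\lambda_j a_{ji},\ \text{(or a large value if no constraint)}\bigr\}.\]
\item I then choose $\lambda_i$ for $i\in C$ large enough that $U_j\le U_C+\lambda_i a_{ij}$ for all earlier $j$ with $a_{ij}>0$.
\end{itemize}

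Fourth, check the inequalities for the remaining cases:
\begin{itemize}
\item Earlier $j$ with $a_{ij}\le 0$: then $\bm{x}_i R^0\bm{x}_j$, which would place $j$ at or after $i$'s class. This contradicts $j$ being earlier, so the case cannot occur.
\item $j$ in the same class: the inequality holds because $a_{ij}\ge 0$ there.
\item Later classes: these are handled when those classes are built.
\end{itemize}
Alternatively, LP duality (Farkas's lemma) turns infeasibility of the inequalities into a cycle carrying a strict inequality, which violates GARP; I would keep this as a backup route.

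The delicate point is the tie bookkeeping inside equivalence classes together with the strict positivity of $\lambda$. I would verify it carefully by finiteness, taking maxima over the finitely many constraints.
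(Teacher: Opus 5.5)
The paper gives no proof of this theorem. It quotes the result from Afriat, Diewert and Varian and uses it as a black box, so your argument has to stand on its own. Your (1)$\Rightarrow$(3) and (2)$\Rightarrow$(1) are the standard arguments and are fine; concavity plus the absence of a bliss point already gives the local non-satiation you use.

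The problem is the orientation in (3)$\Rightarrow$(2). You order the classes so that preferred classes come \emph{later*}. You then dismiss the case ``$j$ earlier, $a_{ij}\le 0$'' because $\bm{x}_iR^0\bm{x}_j$ would put $j$ at or after $i$. That is backwards: $\bm{x}_iR^0\bm{x}_j$ says $i$ is revealed preferred to $j$, so under your ordering $j$ lies at or \emph{before} $i$. This is the generic case, not an impossible one; what your ordering actually excludes is $a_{ji}\le 0$.

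In that case the construction breaks. The constraint $U_j\le U_C+\lambda_i a_{ij}$ forces $U_C\ge U_j$, and when $a_{ij}<0$ it also imposes an \emph{upper} bound on $\lambda_i$. But you make $\lambda_i$ large, and you define $U_C$ as a minimum of $U_{j'}+\lambda_{j'}a_{j'i}$ over earlier $j'$, which can fall below such a $U_j$.

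Here is a concrete failure with two goods. Let $\bm{p}_1=(1,1)$, $\bm{x}_1=(1,0)$, $\bm{p}_2=(3,1)$, $\bm{x}_2=(0,2.5)$, $\bm{p}_3=(2,1)$, $\bm{x}_3=(1,1)$. The only revealed-preference relations are $\bm{x}_3P^0\bm{x}_1$ and $\bm{x}_3P^0\bm{x}_2$, so GARP holds and $1,2,3$ is an admissible order for you. Your recipe gives $U_2=U_1+1.5\lambda_1$ and $U_3\le U_1+\lambda_1<U_2$. The Afriat inequality $U_2\le U_3+\lambda_3 a_{32}=U_3-0.5\lambda_3$ then fails for every $\lambda_3>0$, whatever $U_1,\lambda_1,\lambda_2$ you picked.

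The repair is to reverse the order and process the most preferred classes first.
Then for $i\in C$ and $j$ in an earlier class, $\bm{x}_iR^0\bm{x}_j$ genuinely cannot occur, so $a_{ij}>0$. Choosing $\lambda_i>0$ with $\lambda_i\ge (U_j-U_C)/a_{ij}$ for all earlier $j$ is therefore legitimate.
Your $U_C$ (the minimum) enforces $U_i\le U_j+\lambda_j a_{ji}$ whatever the sign of $a_{ji}$.
The within-class constraints hold with a common $U_C$ because $a_{ij}\ge 0$ inside a class, which is exactly your GARP step.
Every pair in different classes is handled, in both directions, when the later of the two classes is built.

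With this single change your construction is a complete, self-contained proof. The Farkas-lemma route you keep in reserve would also work.
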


Write $\bm{r}_i = \bm{p}_i/m_i$ for income-normalised prices, $w_i^k = r_i^k x_i^k$ for the budget share of good $k$ at observation $i$, and $\rho_{ij}^k = r_i^k/r_j^k$ for the ratio of the normalised price of good $k$ between observations $i$ and $j$. Budget shares lie on the $(K-1)$-simplex $\Delta_{K-1} = \{\bm{w}\in \mathbb{R}^K_+: \sum_k w^k =1\}$.  The following observations are useful to note:
\begin{itemize}
    \item The revealed preference relation can be written in terms of normalised prices and demands $\bm{x}_i R^0 \bm{x}_j \iff \bm{r}_{i}\cdot \bm{x}_j \le 1$, or in terms of relative normalised prices and budget shares $ \bm{x}_i R^0 \bm{x}_j \iff \pmb{\rho}_{ij}\cdot \bm{w}_j \le 1 $,  with the strict relation $P^0$ being given by strict inequalities in each case.
    \item The data $\{\bm{p}_i,\bm{x}_i\}_{i\in[T]}$ satisfy GARP iff the data $\{\bm{r}_i,\bm{x}_i\}_{i\in[T]}$ and $\{\pmb{\rho}_{ij},\bm{w}_i\}_{i,j\in[T],i\ne j}$ satisfy GARP.
    \item The linear constraints $U_i \le U_j + \lambda_j \bm{p}_j \cdot (\bm{x}_i - \bm{x}_j)$ and $\lambda_i>0$ for all  $i,j\in [T]$ are feasible iff the linear constraints $ U_i \le U_j + \lambda_j (\bm{r}_j \cdot \bm{x}_i - 1);\lambda_i>0$ , and   $U_i \le U_j + \lambda_j (\pmb{\rho}_{ji} \cdot \bm{w}_i - 1); \lambda_i>0 $ for all $i,j\in [T]$ are feasible.
\end{itemize}

\subsection{Selten's Area measure}

\begin{definition}[Selten Area]
Fix normalised prices. Let $\bm{w}_1,\dots,\bm{w}_T \in \Delta_{K-1}$ be budget-share vectors, where $\bm{w}_i = (w_i^1,\dots,w_i^K)$. Let $\Delta^T_{K-1}$ denote the product simplex 
\[
\Delta_{K-1}^T
=
\underbrace{\Delta_{K-1}\times \cdots \times \Delta_{K-1}}_{T\ \text{times}}.
\]
Let $S_K \subseteq \Delta^T_{K-1}$ denote the set of $(\bm{w}_1,\dots,\bm{w}_T)$ for which GARP is satisfied. The Selten Area is the fraction of the product simplex for which GARP is satisfied:
\[
A_K := \frac{\mathrm{vol}(S_K)}{\mathrm{vol}(\Delta^T_{K-1})}.
\]
\end{definition}

Selten's Area is a deterministic, geometric object. However,  it is hard to calculate analytically in anything other than low-$K$/low-$T$ environments. This is due to the combinatorics involved in enumerating the various permutations and combinations of the many patches on budget constraints which can combine to induce preference cycles and hence violations. Nonetheless, it can be measured numerically using Monte Carlo methods. This approach is standard in geometric probability: the relative volume of a subset of a space can be interpreted as the probability that a uniformly random point lies in that subset. We use a probabilistic representation as a convenient way to characterise the geometry; no stochastic assumptions are required for the main result.

\begin{definition}[Probabilistic representation]\label{d:prob_representation}
Let the budget shares $\bm{w}_1,\dots,\bm{w}_T$ be i.i.d. draws from the uniform distribution on $\Delta_{K-1}$ (with respect to Lebesgue measure on the simplex). Then the Area with $K$ goods can be written as
\[
A_K = \mathbb{P}\big( (\bm{w}_1,\dots,\bm{w}_T) \in S_K \big),
\]
i.e. the probability that $T$ uniformly random budget-share vectors satisfy GARP.
\end{definition}

When calculating the Area in this manner the distinction between weakly preferred and strictly preferred is immaterial in the following sense. Consider a dataset of normalised prices and budget shares distributed uniformly on the simplex $\{ \bm{r}_i,\bm{w}_i \}_{i \in [T]} $ and define the random directed graph $G_K$ on $[T]$ by placing edge $i\to j$ whenever $\pmb{\rho}_{ij}\cdot \bm{w}_j \le 1$ (equivalently, whenever $\bm{x}_i R^0 \bm{x}_j$).   Since $\pmb{\rho}_{ij}$ is not a scalar multiple of $\bm{1}$, the set $\{\bm{w}\in\Delta_{K-1} : \pmb{\rho}_{ij}\cdot\bm{w} = 1\}$ is the intersection of the simplex with a transverse hyperplane and is therefore $(K-2)$-dimensional, hence has Lebesgue measure zero. As the uniform distribution is absolutely continuous with respect to Lebesgue measure, the set has probability zero, so the distinction between weak and strict revealed preference vanishes almost surely. GARP is therefore violated almost surely if $G_K$ contains a directed cycle. 

A directed cycle on $[T]$ is a sequence of distinct vertices $(i_1, i_2, \ldots, i_L, i_1)$ with $L \ge 2$; edge $\ell$ denotes the ordered pair $(i_\ell, i_{\ell+1})$, with $i_{L+1} := i_1$. For each edge $\ell$ and good $k$, write $\rho_\ell^k := \rho_{i_\ell\, i_{\ell+1}}^k$, and define the Carli-type price index\footnote{A true Carli index is the arithmetic mean of un-normalised price ratios whereas in this paper we use income-normalised prices.} along edge $\ell$ as $\bar{\rho}_\ell := \tfrac{1}{K}\sum_{k=1}^K \rho_\ell^k$. The following result shows that variation in the price ratios $\rho_\ell^k$ across goods forces at least one Carli index in any cycle to exceed~1.

\begin{lemma}[manual-num=1, label={l:cycle_mean_inequality}][Carli mean inequality]
$\prod_{\ell=1}^L \bar{\rho}_\ell \ge 1$, with equality if and only if $\rho_\ell^k$ is constant in $k$ for each $\ell$. In particular, if $\rho_\ell^k$ is non-constant in $k$ for some $\ell$, then $\prod_{\ell=1}^L \bar{\rho}_\ell > 1$, and hence $\max_{\ell\in[L]}\bar{\rho}_\ell>1$.
\end{lemma}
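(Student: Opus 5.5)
The argument is a telescoping identity for the product of price ratios around the cycle, combined with AM--GM applied edge by edge. For each good $k$, the ratios along the cycle multiply to one:
\[
\prod_{\ell=1}^L \rho_\ell^k=\prod_{\ell=1}^L \frac{r_{i_\ell}^k}{r_{i_{\ell+1}}^k}=1,
\]
because the cycle closes with $i_{L+1}=i_1$ and every normalised price is strictly positive. Taking the product over $k$ as well gives $\prod_{\ell}\prod_k \rho_\ell^k=1$.

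Next, apply the arithmetic--geometric mean inequality to the $K$ positive numbers $\rho_\ell^1,\dots,\rho_\ell^K$ on each edge. This gives
\[
\bar\rho_\ell\ge\Bigl(\prod_{k=1}^K\rho_\ell^k\Bigr)^{1/K}.
\]
Equality holds if and only if $\rho_\ell^k$ is constant in $k$. Multiplying these $L$ inequalities, which is legitimate since all terms are positive, yields
\[
\prod_\ell\bar\rho_\ell\ge\Bigl(\prod_\ell\prod_k\rho_\ell^k\Bigr)^{1/K}=1.
\]

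For the equality case, note that a product of positive inequalities $a_\ell\ge b_\ell>0$ is an equality exactly when every factor is an equality. So $\prod_\ell\bar\rho_\ell=1$ if and only if each edge has constant ratios across goods. In particular, a single non-constant edge makes that edge's inequality strict, and hence the whole product is strictly greater than $1$.

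The final claim follows by contradiction. If every $\bar\rho_\ell\le1$, then $\prod_\ell\bar\rho_\ell\le1$, which contradicts the strict inequality just obtained. Hence $\max_\ell\bar\rho_\ell>1$.

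I do not expect a real obstacle here. The only points to check carefully are that positivity is used both for AM--GM and for multiplying inequalities, and that the cycle's vertices need only be indexed cyclically, with distinctness not required.
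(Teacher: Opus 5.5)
Your proof is correct and is essentially the paper's argument. The paper writes $v_\ell^k=\log\rho_\ell^k$ and applies Jensen's inequality to the exponential on each edge, which is your AM--GM step in logarithmic form; it then uses the same telescoping identity and the same equality-case and contradiction reasoning.
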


\begin{proof}
The proof of this, and all following, results is given in Appendix~\ref{s:proofs}.
\end{proof}

The lemma guarantees that whenever the normalised-price ratios $\rho_\ell^k$ vary across goods on at least one edge of a cycle, at least one Carli index must exceed~1. Since \(\bar{\rho}_\ell\) is the unweighted average of the good-level normalised-price ratios \(\rho_\ell^k = r_{i_\ell}^k/r_{i_{\ell+1}}^k\), the condition \(\bar{\rho}_\ell>1\) means that, averaged across goods, normalised prices are higher at observation \(i_\ell\) than at observation \(i_{\ell+1}\). This will be important in what follows as we will show the probability of an edge forming where the Carli index exceeds one is related to the number of goods. 

\section{Revealed Preference in High Dimensions}\label{s:main}
Since the Area coincides with the probability that uniformly random budget shares satisfy GARP or (equivalently) satisfy the Afriat inequalities, our approach is to ask what happens to this probability as $K \rightarrow \infty$. We first consider GARP, the graph-theoretic condition, and then the linear programming condition. 

For each $K$, we let $\{\rho^k_{ij}\}_{k=1}^K$ denote the good-level price ratios and let $\bm{w}_1, \ldots, \bm{w}_T \in \Delta_{K-1}$ be the corresponding budget shares. We consider a sequence of such environments indexed by $K$, and all assumptions below are imposed uniformly in $K$.   

\subsection{Graph-theoretic approach}

\begin{theorem}[manual-num=1, label={t:garp_high_dim}][Vanishing content of GARP]
Let $T\ge 2$ be fixed. Suppose:\\
\asslabel{A1}{ass:A1}{(\ref{ass:A1}) Boundedness.} There exist constants $0<a\le b<\infty$, independent of $K$, such that $a\le \rho_{ij}^k\le b$ for all $K$, $k\in [K]$, and $i\neq j$.\\
\asslabel{A2}{ass:A2}{(\ref{ass:A2}) Price dispersion.} Price dispersion across goods does not vanish as $K$ grows. Formally: For every directed cycle $(i_1, i_2, \ldots, i_L, i_1)$ on distinct vertices in $[T]$ with $L \ge 2$, there exists $\varepsilon > 0$, independent of $K$, such that for all sufficiently large $K$,
\begin{equation}
    \max_{\ell\in [L]} \bar{\rho}_\ell\ge 1+\varepsilon,
\end{equation}
where $\bar{\rho}_\ell:= \frac{1}{K} \sum_{k=1}^K \rho_{i_\ell i_{\ell+1}}^k$ and $i_{L+1}:=i_1$.

\noindent There exists $c_1>0$, depending only on $a, b, \varepsilon$, such that
\begin{equation}
    A_K \ge 1 - C_T \exp(-c_1K),
\end{equation}
where $C_T = \sum_{L=2}^T \binom{T}{L}(L-1)!$ counts the directed cycles on $[T]$. 
In particular, $A_K \to 1$ as $K\to \infty$.
\end{theorem}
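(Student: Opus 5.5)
The plan is a union bound over directed cycles, combined with an exponential tail bound for a single "bad" edge in each cycle. By the discussion following Definition~\ref{d:prob_representation}, GARP fails (up to a null set) only if the random graph $G_K$ contains a directed cycle. Hence
\[
1-A_K \le \sum_{\text{cycles } (i_1,\dots,i_L,i_1)} \mathbb{P}\bigl(i_\ell\to i_{\ell+1} \text{ in } G_K \text{ for all } \ell\in[L]\bigr),
\]
and there are exactly $C_T$ such cycles. It therefore suffices to show that each cycle has probability at most $\exp(-c_1K)$, with $c_1$ depending only on $a,b,\varepsilon$.

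Fix a cycle. By \ref{ass:A2}, for $K$ large there is an edge $\ell$ with $\bar\rho_\ell\ge 1+\varepsilon$; this is the quantitative version of Lemma~\ref{l:cycle_mean_inequality}. The probability that all edges of the cycle are present is at most the probability that this single edge is present. That edge is the event $\pmb{\rho}_\ell\cdot \bm{w}_{i_{\ell+1}}\le 1$, which depends only on the one uniform vector $\bm{w}_{i_{\ell+1}}$. (Because the cycle vertices are distinct, the edge events are in fact independent, but we do not need this.)

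To bound this event, use the standard representation of the uniform law on $\Delta_{K-1}$: $\bm{w}=\bm{E}/\sum_k E_k$ with $E_1,\dots,E_K$ i.i.d.\ standard exponential. Setting $a_k:=\rho_\ell^k-1$, the event becomes
\[
\sum_{k=1}^K a_k E_k\le 0 .
\]
By \ref{ass:A1}, $|a_k|\le B:=\max(b-1,1-a)$, and by the choice of $\ell$, $\sum_k a_k\ge K\varepsilon$.

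Now apply a Chernoff bound. For $t>0$ with $t|a_k|\le 1/2$,
\[
\mathbb{P}\Bigl(\sum_k a_kE_k\le 0\Bigr)\le \prod_k \mathbb{E}\, e^{-ta_kE_k}=\prod_k (1+ta_k)^{-1}=\exp\Bigl(-\sum_k\log(1+ta_k)\Bigr).
\]
Using $\log(1+u)\ge u-u^2$ for $|u|\le 1/2$, the exponent is at most
\[
-t\sum_k a_k+t^2\sum_k a_k^2\le -tK\varepsilon+t^2KB^2 .
\]
Choose $t=\min\{1/(2B),\ \varepsilon/(2B^2)\}$. The exponent is then at most $-tK\varepsilon/2$, so $c_1:=t\varepsilon/2$ works and depends only on $a,b,\varepsilon$.

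Summing over the $C_T$ cycles gives $A_K\ge 1-C_T e^{-c_1K}$ for all large $K$. To make $c_1$ uniform across cycles, take the minimum of the $\varepsilon$'s over the finitely many cycles on $[T]$. The limit $A_K\to1$ follows immediately.

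The main obstacle I expect is this tail step. The coefficients $a_k$ can have either sign, so the Chernoff exponent must be controlled with care: the constraint $t|a_k|\le 1/2$ keeps the moment generating function finite when $a_k<0$, and this is exactly where the boundedness assumption \ref{ass:A1} is needed. A secondary, minor point is checking that the dependence of $\varepsilon$ on the cycle in \ref{ass:A2} causes no problem; it does not, since $T$ is fixed and there are only finitely many cycles.
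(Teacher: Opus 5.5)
Your proof is correct and follows the paper's argument: a union bound over the $C_T$ directed cycles, an edge with $\bar{\rho}_\ell\ge 1+\varepsilon$ supplied by (A2), the exponential representation of the uniform law on the simplex, and a Chernoff bound using $\log(1+x)\ge x-x^2$ for $|x|\le 1/2$. The differences are cosmetic. You bound the cycle probability by monotonicity rather than by the independent product, and you use $B=\max(b-1,1-a)$ in place of $b-a$, which needs no separate argument that $a<1<b$. You also make explicit two points the paper leaves implicit: the minimum of $\varepsilon$ over the finitely many cycles, and the ``for large $K$'' qualification.
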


\begin{proof} See Appendix~\ref{a:main_proof}.
\end{proof}

The key geometric object in Theorem 1 is the product of budget-share simplices: $\Delta_{K-1}^T$. A collection of $T$ budget-share vectors corresponds to a point within this product simplex:
\[
(\bm{w}_1,\ldots,\bm{w}_T)\in \Delta^T_{K-1}.
\]
This product simplex is the full feasible choice space: it contains every possible collection of budget-share choices on the \(T\) budget constraints. Inside this space, some points satisfy GARP and some points violate it. Theorem 1 is a statement about the relative volume of these two regions. It says that, holding \(T\) fixed and increasing the number of goods \(K\), the region of $\Delta^T_{K-1}$ that violates GARP becomes exponentially small. Equivalently, the GARP-satisfying region comes to occupy almost all of the feasible choice space.

To see why, consider how a GARP violation appears geometrically. A revealed preference edge \(i\to j\) occurs when $\pmb{\rho}_{ij}\cdot \bm{w}_j\leq 1$. The inequality \(\pmb{\rho}_{ij}\cdot \bm{w}_j\leq 1\) divides the budget share simplex for \(\bm{w}_j\) into two pieces: the part where the edge \(i\to j\) exists (\(\pmb{\rho}_{ij}\cdot \bm{w}_j\leq 1\)), and the part where it does not (\(\pmb{\rho}_{ij}\cdot \bm{w}_j> 1\)).  A GARP violation requires a number of such edges to come together to form a directed cycle, for example
\[
i_1\to i_2\to \cdots \to i_L\to i_1.
\]
Each possible cycle therefore corresponds to a geometric patch inside the product simplex $\Delta^T_{K-1}$: the set of budget-share vectors for which all the inequalities needed for that particular cycle hold simultaneously. The entire GARP-violating region is the union of all these cycle patches.

The assumptions in Theorem 1 ensure that this geometric object is well behaved as \(K\) grows. Assumption A1 rules out relative price ratios that become arbitrarily large or arbitrarily small as goods are added; geometrically, it prevents the slicing hyperplanes \(\pmb{\rho}_{ij}\cdot \bm{w}_j=1\) from becoming degenerate because of extreme price coordinates. Assumption A2 ensures that relative price dispersion does not vanish in high dimensions; in particular, along every possible cycle there remains at least one edge whose average relative normalised price is bounded away from one. Without A2, the cycle inequalities could become nearly neutral in the large-\(K\) limit, and the argument that cycle patches are geometrically small would lose force.

The crucial point behind Theorem 1 is that every cycle patch must pass through at least one geometrically restrictive edge. By the Carli mean argument, together with A2, along any cycle at least one edge has unweighted average relative normalised price bounded above one, 
\[
\bar{\rho}_{i_\ell i_{\ell+1}}
=
\frac{1}{K}\sum_{k=1}^K \rho^k_{i_\ell i_{\ell+1}}
>1+\varepsilon
\]
Call this an ``expensive'' edge. On this expensive edge it must still be the case that, since this is an edge, the weighted average relative prices are less than one: $
\pmb{\rho}_{ij}\cdot \bm{w}_j\leq 1$. This requires that the period-\(j\) budget-share vector must be skewed enough towards goods with low values of $\rho_{ij}^{k}$ in order to pull the weighted average down, away from an unweighted average over \(1+\varepsilon\) , to below \(1\). Such patterned deviation towards some goods and not others is a geometrically selective region relative to the whole simplex of feasible budget. The reason is that, in high dimensions, most of the simplex’s volume lies near allocations in which expenditure is spread thinly across many goods, rather than concentrated on a subset of them. A useful analogy is coin flipping. With many coin flips, outcomes with about half heads dominate not because any one such outcome is intrinsically special, but because there are vastly more approximately balanced outcomes than extreme ones such as all heads or almost all heads. The same logic applies to the simplex. There are relatively few ways, in volume terms, to put a large share of the unit mass on a small set of coordinates; those allocations by definition live near corners, edges, or low-dimensional faces. There are vastly more ways to spread the unit mass thinly across many coordinates. Thus, although the simplex has more and more corners as \(K\) rises, almost none of its volume is close to any particular corner or strongly skewed face. Most of the volume is associated with even allocations in which no small group of goods receives a very large share.

This is why the cycle patches shrink. Any cycle patch must include at least one expensive edge, along which the relevant budget-share vector are sufficiently skewed towards the low-\(\rho_{ij}^k\) goods. That alignment is geometrically restrictive. As \(K\) grows, A1 keeps the individual price-ratio coordinates controlled, while A2 keeps the relevant average price-ratio gap away from zero. Together these conditions allow the concentration-of-volume argument to bite uniformly in \(K\): the relative volume of the part of the simplex with enough alignment to satisfy the expensive edge shrinks exponentially. Since \(T\) is fixed, there are only finitely many possible directed cycles, so the number of cycle patches does not grow with \(K\). Theorem 1 then says that even after taking the union over all possible cycle patches, their total relative volume is at most exponentially small in \(K\). Therefore almost every point in $\Delta^{T}_{K-1}$, in the Selten Area sense, lies outside the cycle patches and hence satisfies GARP.

The proof of Theorem 1 itself uses the language of geometric probability to translate these geometric ideas into probabilistic concepts which are amenable to standard arguments. Under the uniform distribution on the simplex, probability is just a way of measuring relative volume: the probability that a randomly drawn point lies in a given region is exactly the fraction of the total feasible volume occupied by that region. So when the proof writes \(A_K = P((\bm{w}_1,\ldots,\bm{w}_T)\in S_K)\), it is saying that Selten’s Area is the relative volume of the GARP-satisfying region inside the product simplex. Likewise, the event that a particular directed cycle appears in the random graph \(G_K\) is simply the event that the sampled point \((\bm{w}_1,\ldots,\bm{w}_T)\) falls inside the corresponding cycle patch, that is, the geometric region where all of the inequalities defining that cycle hold at once. The Chernoff bound (a standard concentration inequality that shows the probability of a sum or weighted average of independent random terms falling a fixed amount below its mean decays exponentially in the number of terms) then enters as the formal step that turns geometry into a rate: it shows that, for an edge whose Carli mean exceeds one, the set of budget-share vectors on which \(\pmb{\rho}\cdot \bm{w} \le 1\) still holds is not a typical part of the simplex but a thin large-deviation region. In that way, the Chernoff bound turns the geometric fact that the relevant cycle patch lies far from the centre of the simplex into an exponential bound on the probability that the cycle occurs, which is the same as an exponential bound on its relative volume.

It is worth noting the relation between WARP and GARP. In two-good environments, rationalisability is governed by WARP; only two 2-cycles can matter---cycles of length greater than two do not contribute to the Area of WARP. With three or more goods, longer cycles become relevant, and GARP is the correct necessary-and-sufficient condition because it rules out cycles of all lengths. Theorem 1 is a statement about cycles of all lengths: its proof bounds the volume of the regions corresponding to 2-cycles, 3-cycles, 4-cycles, and so on, showing that all of these cycle regions become exponentially small relative to the full product simplex as the number of goods grows. The same concentration mechanism appears at the level of individual edge inequalities: for an edge whose Carli mean exceeds one, the chance that $\pmb{\rho}_{ij}\cdot\bm{w}_j\le1$ holds is exponentially small in $K$. Longer cycles require several such inequalities to hold simultaneously and are therefore no easier to generate. As a result, violations of both WARP and GARP become negligible as  $K$ grows, because concentration effects make all forms of inconsistency rare in high dimensions. However, it is important to note that the result is asymptotic rather than a monotonicity theorem for small steps in $K$. It does not logically force $A_3>A_2$ in every possible environment; rather it points to the mechanisms at play and direction of pressure: adding goods makes the relevant cycle regions smaller relative to the full feasible space. In Section 5 we show evidence from simulations that this effect is already substantial at moderate $K$.

\subsection{The Afriat Linear Program}
We next provide the complementary proof based on the Afriat linear program. Theorem 2 gives the same asymptotic conclusion in the Afriat formulation: instead of cycle patches in the product simplex, it studies feasibility in LP coefficient space. The key point is that the limiting coefficient system is feasible with slack, so small perturbations of the coefficients can be absorbed. The Chernoff bound then shows that, as the number of goods grows, the LP coefficients stay close enough to their means that this slack survives. As a result, infeasibility occurs only on an exponentially small region of coefficient space.

Define the scalar coefficient $e_{ij} := \bm{r}_i \cdot \bm{x}_j = \pmb{\rho}_{ij} \cdot \bm{w}_j$ (so $e_{ii} = 1$); using the normalised-price form of the Afriat inequalities from \S3 (Preliminaries), the LP is a feasibility problem in $2T$ variables with $T(T-1) + T$ constraints, depending on $K$ only through $\{e_{ij}\}_{i\neq j}$. Under uniform budget shares, $\mathbb{E}[e_{ij}] = \bar{\rho}_{ij}$, the Carli price index.

\begin{theorem}[manual-num=2, label={t:lp_vanishing}][Vanishing probability of LP infeasibility]
Let $T\ge 2$ be fixed. Suppose:\\
(\ref{ass:A1}) Boundedness. There exist constants $0<a\le b<\infty$, independent of $K$, such that $a\le \rho_{ij}^k\le b$ for all $K$, $k\in [K]$, and $i\neq j$.\\
\asslabel{A2$'$}{ass:A2p}{(\ref{ass:A2p}) Positive constraint-chain sums.} There exists $\eta>0$, independent of $K$, such that for all sufficiently large $K$ and every sequence of distinct indices $(i_\ell)_{\ell\in [L]}$ with $L\ge 2$ and $i_{L+1}:=i_1$,
\begin{equation}\label{e:chain_assumption}
    \sum_{\ell=1}^L \bigl(\bar{\rho}_{i_\ell i_{\ell+1}} - 1\bigr) \ge \eta.
\end{equation}

\noindent Then there exists $c_2>0$, depending only on $a,b,\eta,T$, such that
\begin{equation}
    A_K = \mathbb{P}\!\left(\text{Afriat Linear Program is feasible}\right) \ge 1 - T(T-1)\exp(-c_2K).
\end{equation}
In particular, the probability of feasibility tends to~1 as $K\to\infty$.
\end{theorem}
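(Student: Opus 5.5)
The plan has two parts. First, a deterministic feasibility statement: if every coefficient $e_{ij}$ lies within $\delta$ of its mean $\bar{\rho}_{ij}$, for a suitable $\delta>0$ depending only on $\eta$ and $T$, then the Afriat LP is feasible. Second, a concentration bound showing that the probability that some $e_{ij}$ leaves this window is at most $T(T-1)\exp(-c_2K)$. The union bound over the $T(T-1)$ ordered pairs produces exactly the constant in the statement.

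\textbf{Deterministic step.} Write the constraints as $U_i - U_j \le \lambda_j(e_{ji}-1)$ with $\lambda_j>0$. I will try the simplest choice, $\lambda_j\equiv 1$. The LP then reduces to the potential problem $U_i-U_j\le c_{ji}:=e_{ji}-1$ on the complete digraph. By the standard shortest-path (Bellman--Ford) characterisation, this system is feasible if and only if every directed cycle has nonnegative total weight $\sum_\ell c_{i_\ell i_{\ell+1}}$. Every directed cycle on $[T]$ is a sequence of distinct indices of length $L\le T$. If $|e_{ij}-\bar{\rho}_{ij}|\le\delta$ for all $i\ne j$, then \eqref{e:chain_assumption} gives
\[
\sum_{\ell=1}^L (e_{i_\ell i_{\ell+1}}-1)\;\ge\;\eta - L\delta\;\ge\;\eta-T\delta.
\]
Taking $\delta=\eta/(2T)$ makes every cycle weight strictly positive. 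A feasible $U$ is then the vector of shortest-path distances from an auxiliary source. This gives feasibility with slack, and no need arises to tune the $\lambda_j$. The orientation of indices (whether the chain sums in A2$'$ run along $i\to j$ or $j\to i$) has to be matched to the constraint form carefully, but \eqref{e:chain_assumption} is imposed for every cyclic sequence, so both orientations are covered.

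\textbf{Probabilistic step.} Represent a uniform point on $\Delta_{K-1}$ as $\bm{w}_j=\bm{E}/\sum_k E^k$, with $E^k$ i.i.d.\ standard exponentials. Then
\[
e_{ij}=\frac{\sum_k \rho_{ij}^k E^k}{\sum_k E^k}.
\]
The event $e_{ij}\ge\bar{\rho}_{ij}+\delta$ is the event $\sum_k(\rho^k_{ij}-\bar{\rho}_{ij}-\delta)E^k\ge0$, a sum of independent terms with mean $-\delta K$ and coefficients bounded by $b-a+\delta$ under A1. A Chernoff bound using the exponential moment generating function (valid for a small tilt parameter $s<1/(b-a+\delta)$) gives probability at most $\exp(-cK)$, with $c$ depending only on $a,b,\delta$. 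The lower tail is handled symmetrically.

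\textbf{Expected obstacle.} The main obstacle is making this Chernoff constant uniform. The optimisation over $s$ has to give a bound like $\exp(-K\delta^2/(C(b-a+\delta)^2))$ for every $K$ and every admissible price configuration, using only A1. I would expand $\log\mathbb{E}e^{sZ}=-\log(1-sz)$ to second order. A1 is also what makes the weight variation bounded, which is needed for this expansion. Finally, the deterministic step needs the closeness only for large $K$, where A2$'$ holds. For the remaining finitely many small $K$, the constant $c_2$ can be shrunk so that the bound $1-T(T-1)\exp(-c_2K)$ is trivially true.
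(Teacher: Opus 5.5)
Your proposal is correct and follows essentially the same route as the paper's proof: set $\lambda\equiv 1$, reduce to difference constraints whose cycle sums are controlled by A2$'$ with $\delta=\eta/(2T)$, bound coefficient deviations by a Chernoff argument through the exponential representation of the uniform simplex, and union-bound over the $T(T-1)$ ordered pairs. The differences are minor: the paper builds fixed potentials $U^*$ from the mean system tightened by $\eta_0=\eta/(2T)$, rather than applying the cycle criterion to the random coefficients, and it uses only the lower tail $e_{ij}\ge\bar{\rho}_{ij}-\eta_0$, so your upper-tail bound is unnecessary (if kept, it doubles the union bound, and that factor must be absorbed into $c_2$ rather than giving ``exactly'' the stated constant).
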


\begin{proof} See Appendix~\ref{a:lp_proof}.
\end{proof}

Theorem 2 says that the mean Afriat system is feasible with slack, and random fluctuations around the mean are too small to destroy that slack except on a set of exponentially small volume. The key observation is that under uniform budget shares, each LP coefficient $e_{ij} = \pmb{\rho}_{ij} \cdot \bm{w}_j$ has expected value equal to the Carli index $\bar{\rho}_{ij}$. So in expectation, the random LP is the deterministic LP with $e_{ij}$ replaced by $\bar{\rho}_{ij}$.

The deterministic LP is thus feasible with slack. Its constraints take the form $U_j - U_i \le \bar{\rho}_{ij} - 1$, and A2$'$ guarantees $\sum_\ell (\bar{\rho}_\ell - 1) \ge \eta$ around any cycle. This positive cycle sum is the classical feasibility test for difference constraints, and leaves enough room to tighten every constraint by $\eta_0 := \eta/(2T)$ and still find a solution (Lemma~5).

For the random LP, concentration closes the remaining gap. Each $e_{ij}$ is a weighted average of $K$ price ratios, so a Chernoff bound (Lemma~4) shows it stays within $\eta_0$ of its mean with probability at least $1 - \exp(-c_2 K)$. By a union bound over the $T(T-1)$ pairs, all coefficients stay close enough simultaneously with probability at least $1 - T(T-1)\exp(-c_2 K)$. On this event the slack of the deterministic solution exactly covers the random fluctuation, so the same potentials remain feasible for the random LP and the data are GARP-consistent.

Comparing Theorems 1 and 2 we note the following. First, A2 and A2$'$ are equivalent under A1 and the Carli cycle inequality: A2$'$ implies A2 with $\varepsilon = \eta/T$ (the largest of $L \le T$ summed terms is at least $\eta/T$), and A2 implies A2$'$ via $\prod_\ell \bar{\rho}_\ell \ge 1$ (a symmetric minimisation forces $\sum_\ell (\bar{\rho}_\ell - 1) \gtrsim \varepsilon^2$). So Theorem 2 is a complementary proof rather than a result under a weaker assumption. Second, the prefactor $T(T-1)$ in Theorem 2 grows quadratically in $T$, in contrast to Theorem 1's $C_T = \sum_{L=2}^T \binom{T}{L}(L-1)!$ which grows factorially ($\sim 10^6$ at $T = 10$). For moderate $T$ and finite $K$ the LP bound is therefore tighter in the prefactor; the graph-theoretic bound's advantage emerges asymptotically through a sharper exponent.

We now assess the quantitative importance of these results in empirically relevant environments.

\section{Simulation Studies}

This section has three aims: (i) to quantify the rate at which empirical content declines with dimensionality; (ii) to assess the magnitude of the effect given empirically relevant price distributions; and (iii) to evaluate potential approaches to restoring empirical content in observational and experimental settings.

To provide an empirically-relevant benchmark for simulations we use the distribution of normalised prices in real-world shopping data. The data are the ACNielsen Homescan Panel used in \cite{Aguiar_Hurst_2007}. This is a household-level scanner dataset that records detailed information on grocery purchases. It includes exact prices paid for items at the UPC (barcode) level, along with expenditure and quantity information. The data track shopping behaviour across \textit{all} of the retail outlets participating households use during their time in the panel. The sample covers 2,100 households and nearly one million transactions in the Denver area from 1993 to 1995. These data provide a highly granular, household-level view of prices and spending, making them well-suited for studying price dispersion. We define the budgeting period as a week and normalise the price data by weekly total expenditure. 

The focus is the distribution of the inverse normalised price for each good as this determines the distribution of intercepts of the budget constraints and hence relative price variation and the extent to which constraints cross. We find that in the ACNielsen data this distribution is well approximated by a log-Normal distribution with  $\hat{\mu}=5.69$ and $\hat{\sigma}=1.19$ (see Appendix B for details). Since our simulations draw inverse normalised prices for all goods from a common distribution, the Area is not affected by the location parameter.  In what follows we therefore round the standard deviation and use a $\text{LogN}(0,1)$ as our benchmark. This calibration ensures that the simulated environments match the degree of price dispersion observed in real-world consumption data. We also note that the $\text{LogN}(0,1)$ calibration satisfies A1 (with explicit $a, b$ after a mild tail truncation) and A2 (numerically verifiable along sampled cycles), so the simulations operate within the regime covered by Theorem 1.

\subsection{The effects of varying K and T}

Figure~\ref{fig:AK} illustrates the way in which the Area changes with the number of goods by displaying $A(K)$ curves. 

\begin{figure}[!htb]
    \centering
    \caption{ The relationship between Area and the number of goods.}
    \begin{subfigure}[b]{0.5\textwidth}
        \centering
       \includegraphics[width=1.0\textwidth]{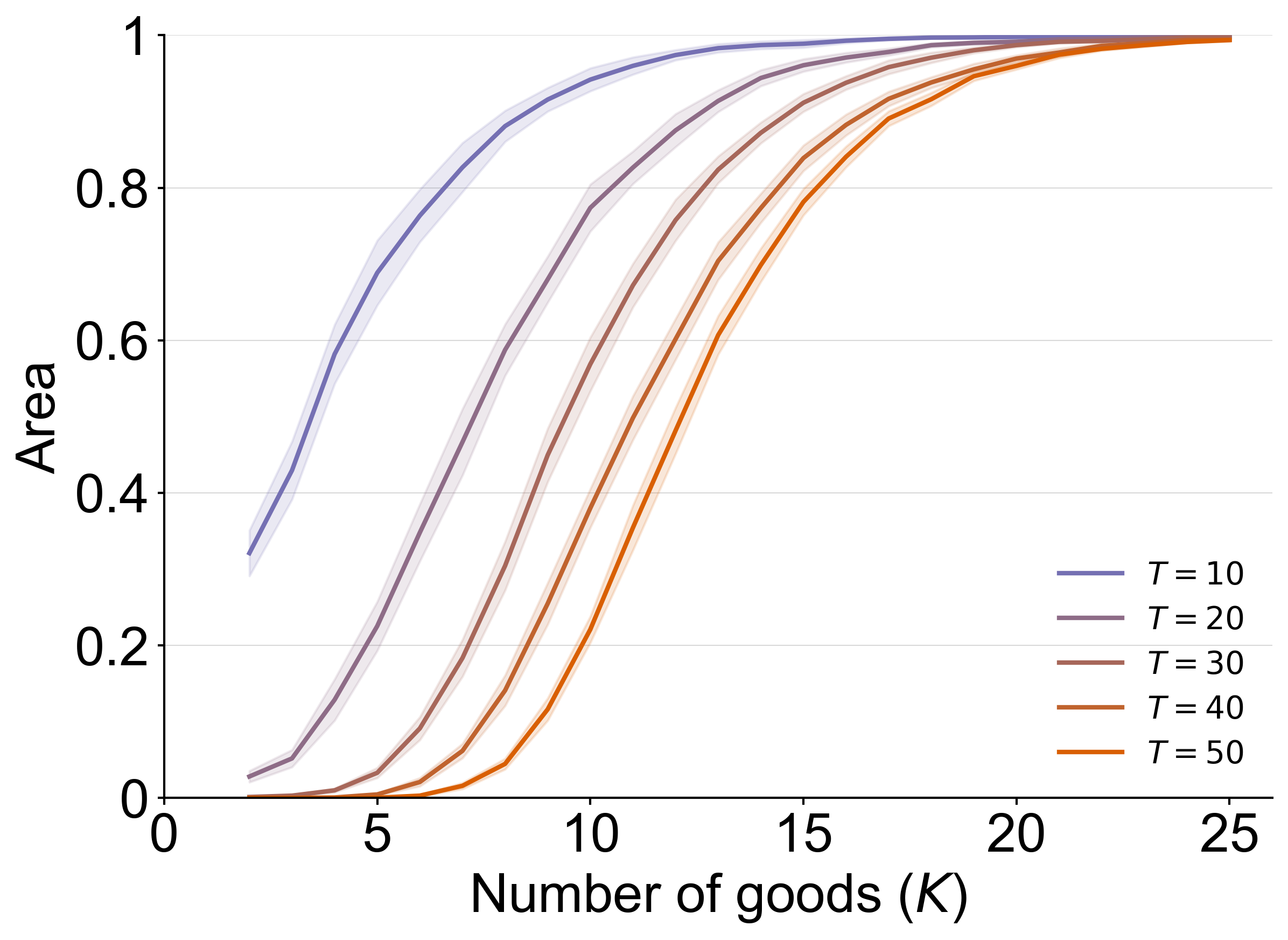}
        \caption{Area and $K$, by $T$.}\label{fig:AK_by_T}
    \end{subfigure}\hfill
    \begin{subfigure}[b]{0.5\textwidth}
    \centering
    \includegraphics[width=1.0\textwidth]{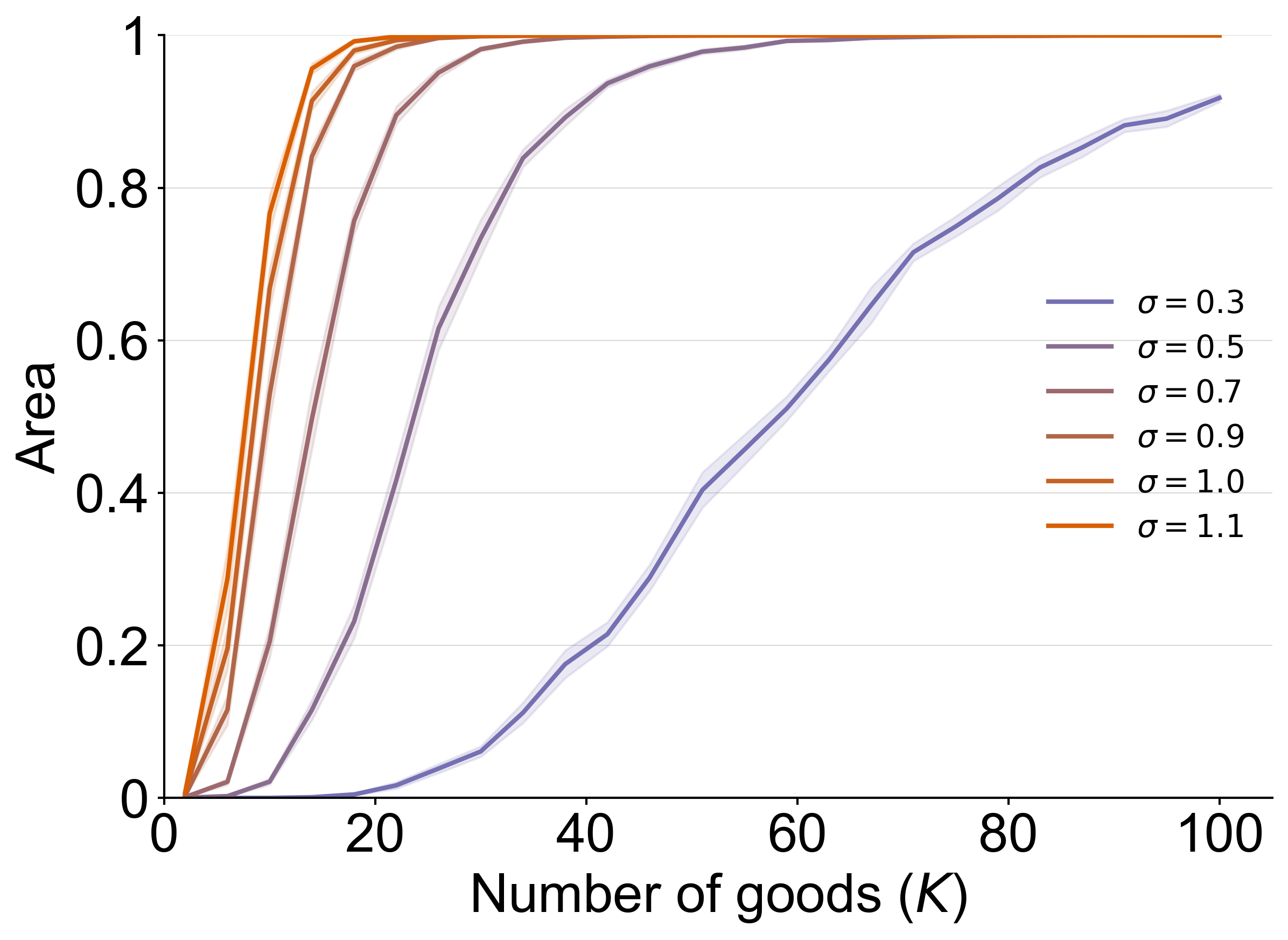}
    \caption{Area and $K$, by $\sigma$.}\label{fig:AK_by_sigma}
    \end{subfigure}
    \note[Note]{ \ref{fig:AK_by_T} shows the effect as the number of observations increases using the benchmark $\text{LogN}(0,1)$; \ref{fig:AK_by_sigma} shows the effect of varying the standard deviation of $\text{LogN}(0,\sigma)$. Inverse normalised prices are drawn as $1/r^k \sim \text{LogN}(0,\sigma)$. Panel (a): $T \in \{10, 20, 30, 40, 50\}$, $\sigma = 1$, $K$ up to $25$. Panel (b): $\sigma \in \{0.3, 0.5, 0.7, 0.9, 1.0, 1.1\}$, $T = 25$, $K$ up to $100$. Each $(K, T, \sigma)$ point uses 20 price replications and adaptive Monte Carlo with up to 10{,}000 budget-share draws per replication. Shaded bands are 95\% confidence intervals across replications.}
    \label{fig:AK}
\end{figure}
Panel~\ref{fig:AK_by_T} illustrates the effect of increasing the number of goods conditional on the number of observations. It shows that increasing the number of observations slows, but does not prevent, the loss of empirical content. It also shows that, though the effects of adding observations are attenuated as $T$ increases--the $A(K)$ curves are more closely spaced as $T$ increases. Under the benchmark $\text{LogN}(0,1)$ with $T = 10$, the Area exceeds $0.9$ by $K = 9$ and $0.95$ by $K = 11$. Doubling $T$ to $20$ shifts these thresholds only to $K = 13$ and $K = 15$ respectively; even at $T = 50$, $A_K$ reaches $0.95$ by $K = 20$.

Panel~\ref{fig:AK_by_sigma} illustrates the effect of increasing the number of goods conditional on the dispersion of inverse normalised prices. Lower dispersion in relative prices reduces the Area---having budget constraints tightly clustered so that (i) they cross frequently and (ii) the relative prices are similar but not identical---but the effect of dimensionality remains dominant. At $T = 25$, reducing $\sigma$ from $1.0$ to $0.5$ shifts the $K$ threshold for $A_K \ge 0.9$ from $K = 14$ to $K = 42$.

Our overall takeaway is that responses such as increasing the number of observations or relying on realistic levels of price dispersion do not prevent the rapid loss of empirical content.

\subsection{The Area under Dimension-reducing Restrictions}

A natural response to the loss of empirical content is to impose additional structure that reduces the effective dimensionality of the problem. A standard approach is to assume separable preferences, which partition goods into groups and represent choices through lower-dimensional subutilities.

\cite{varian1983non}  provides necessary and sufficient conditions for rationalisation under both weak and additive separability. Unfortunately the conditions for weak separability are not known to admit a polynomial-time test.  We therefore work with simple necessary conditions—namely that the data satisfy GARP both overall and within each group. The Areas presented for the weakly separable case should therefore be interpreted as upper bounds on the true Areas. For additive separability, by contrast, the conditions are linear, and we implement the corresponding necessary and sufficient conditions directly.

The effect of separability on the Area operates through three channels. First, each commodity group contains fewer than $K$ goods, so the Areas associated with the subutilities are smaller than that of the full $K$-good problem. Second, the macro utility maps from $\mathbb{R}^{\tfrac{K}{G}} \rightarrow \mathbb{R}$ rather than from $\mathbb{R}^K$ to $\mathbb{R}$ (where $G$ denotes the group size, so the number of groups is $\tfrac{K}{G}$). This further reduces the Area. Third, both the within-group and across-group conditions must be satisfied, so a violation in any component leads to rejection of separability. Taken together, these restrictions make separable models more restrictive than the non-separable case.

\begin{figure}[ht]
\centering
\caption{Area under separability restrictions}
\includegraphics[width=0.6\textwidth]{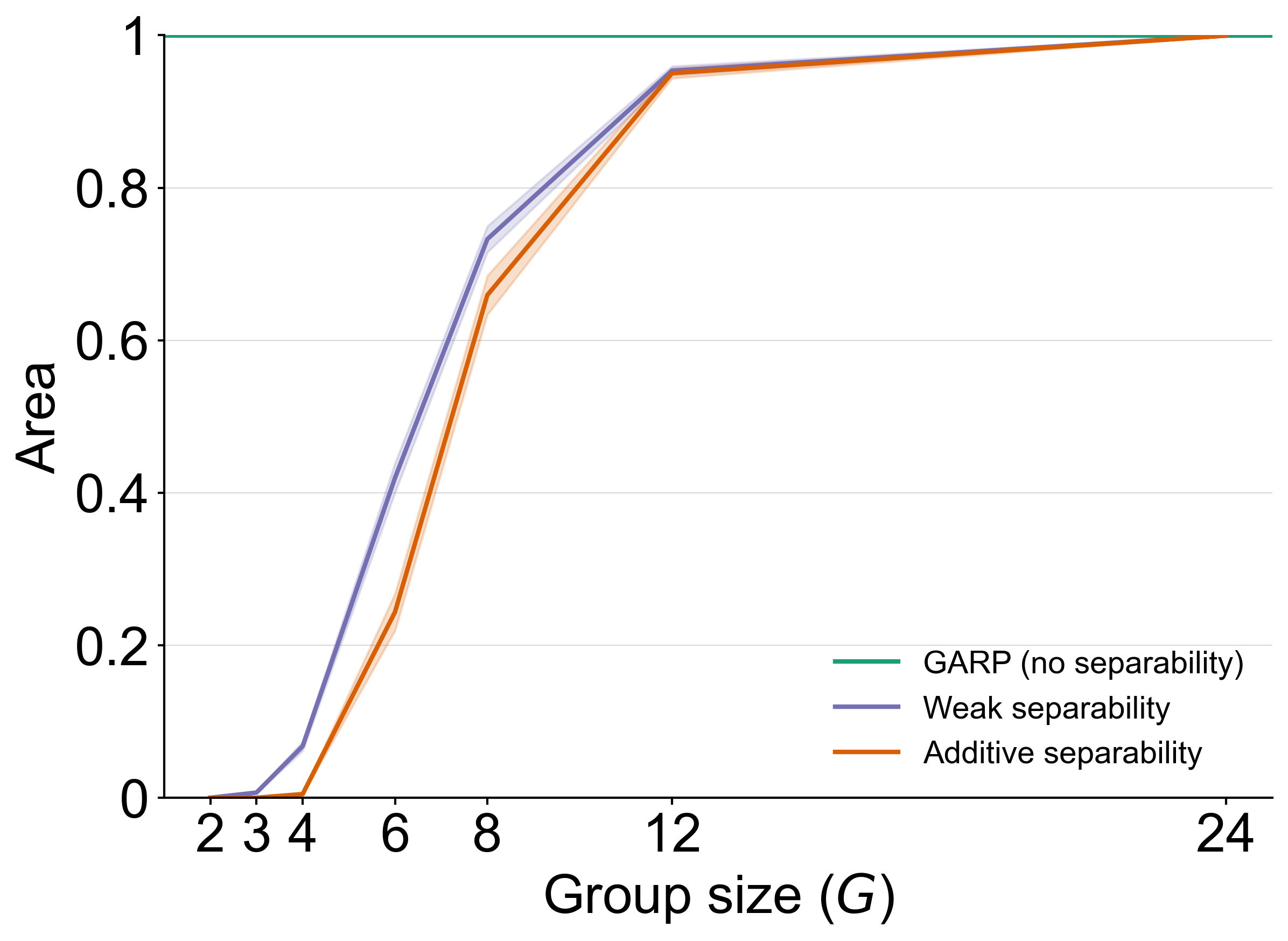}

\note[Note]{$K=24$ goods, $T=10$ observations, prices drawn as $1/r^k \sim \text{LogN}(0,1)$. The green line shows the Area for the non-separable model. The purple and orange lines show the $A(G)$ curve under weak and additive separability, respectively, as a function of group size $G$ and equally-sized groups. For each $G$, 100 random equal partitions are averaged over 10 independent price replications, with adaptive Monte Carlo up to 10{,}000 budget-share draws per partition. Shaded bands are 95\% confidence intervals.}
\label{f:separability}
\end{figure}

Figure~\ref{f:separability} compares the $A(G)$ curves under weak and additive separability assumptions for different group sizes $G$ (with $K/G$ equal-sized groups) and a fixed number of goods ($K=24$). As $G$ increases (i.e., as we have fewer, larger groups), the Area increases in a similar manner to the way it does with the number of goods. However, with $K=24$ and $T=10$, the Area for the unrestricted model is essentially one. The reduction in Area due to separability is therefore economically significant. Recalling that the Area for weak separability is an upper bound (as only the necessary conditions are computationally feasible), the difference between the $A(G)$  curves for weak and additive separability is clearly modest. At $K = 24$ and $T = 10$ the unrestricted Area is essentially one ($0.9996$). Separability into 3 groups of 8 goods ($G = 8$) reduces it to $0.73$ (weak) and $0.66$ (additive); reducing to 6 groups of 4 ($G = 4$) further drops the Area to $0.07$ (weak) and $0.005$ (additive).

Our overall takeaway is that separability can meaningfully increase empirical content, but the gains depend on the degree of aggregation and may be limited.

\subsection{The Design of Experiments}

In experimental settings the Area of rational choice is controlled through the design of the budget constraints. Experimental tests of rationality have mostly been in two-dimensional settings but recent experimental designs have increased dimensionality to make the tasks more cognitively demanding  (see \cite{Ahn_2014}, \cite{Halevy_2024} and \cite{Dembo_2025}). This may reduce empirical content.

In the literature there are two broad approaches to experimental design: a fixed design and an adaptive design. The first, and most standard, typified by \cite{Choi_2007} and \cite{Choi_2014}, pre-selects a fixed set of budget constraints to be presented to experimental subjects. The second uses an adaptive design in which budget constraints are chosen sequentially, conditional on the observed actions of the subject. Specifically, the subject is given an initial budget constraint and makes a choice. The next constraint is then selected so that the first bundle is exactly affordable, and so on.  This approach originates in \cite{Sippel_1997}  and is developed further in  \cite{BBC_2003}.
\begin{figure}[!htb]
    \centering
     \caption{Area and experimental design. }
    \begin{subfigure}[b]{0.5\textwidth}
        \centering
       \includegraphics[width=1.0\textwidth]{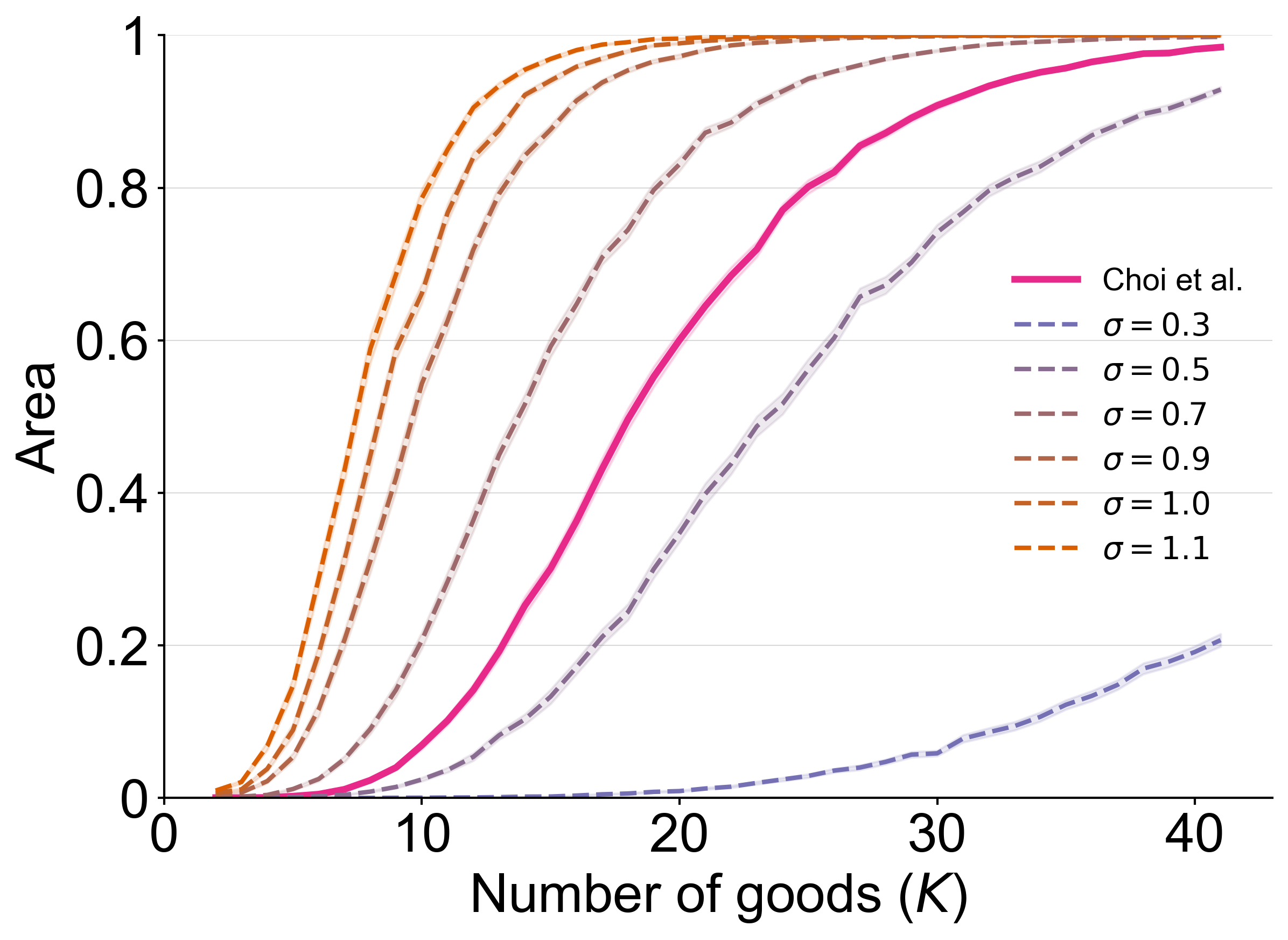}
        \caption{Fixed design.}\label{fig:Choi}
    \end{subfigure}\hfill
    \begin{subfigure}[b]{0.5\textwidth}
    \centering
    \includegraphics[width=1.0\textwidth]{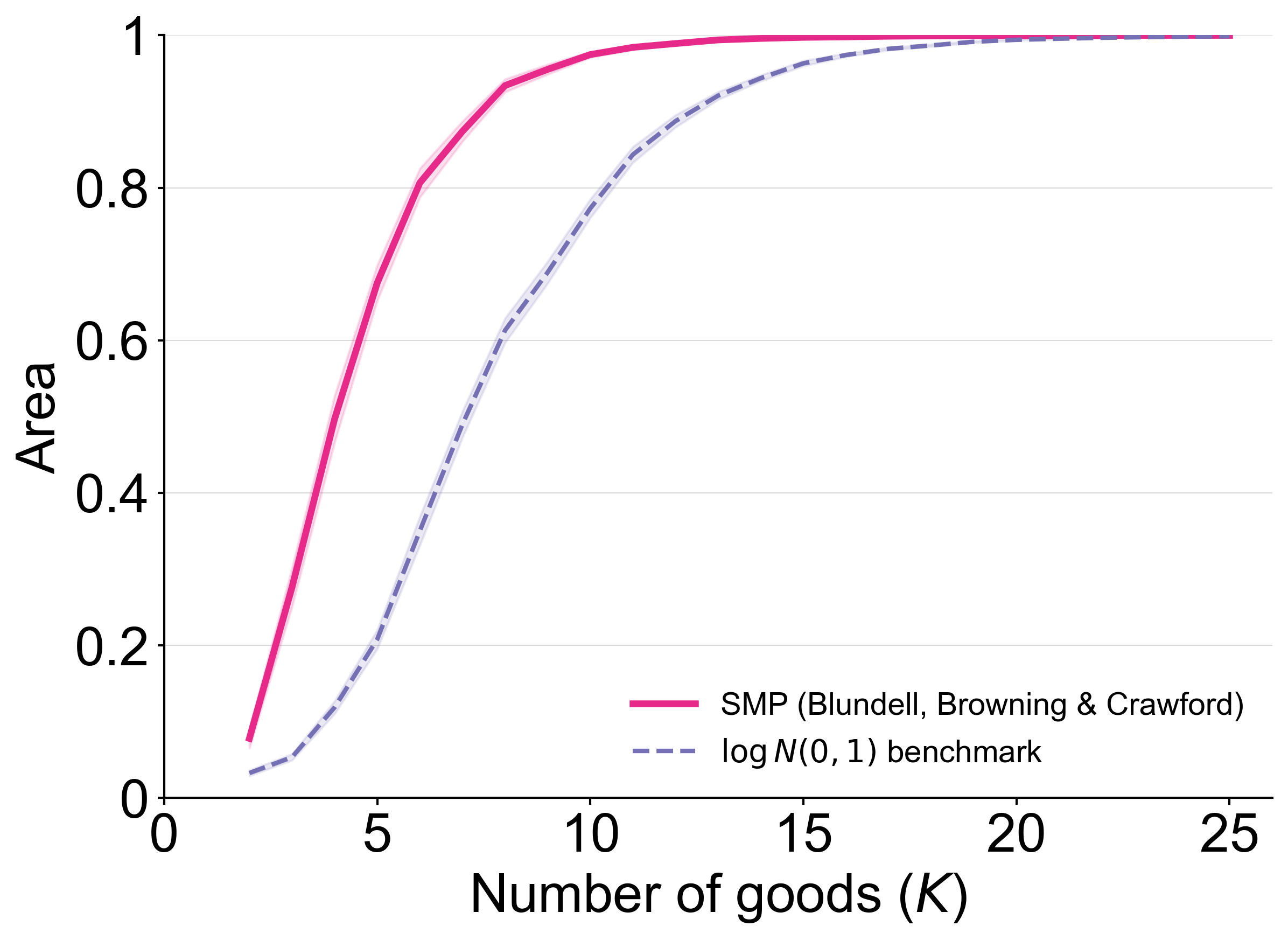}
    \caption{Choice-adapted design.}\label{fig:BBC}
    \end{subfigure}
    \note[Note]{\ref{fig:Choi} shows Area against $K$ for the design in \cite{Choi_2014} with $T=25$ and the Area under the benchmark varying the standard deviation of $\text{LogN}(0,\sigma)$. \ref{fig:BBC} shows the Area against the number of goods $(K)$ using the design in \cite{BBC_2003} and the benchmark $\text{LogN}(0,1)$. Inverse normalised prices for the LogN benchmarks are drawn as $1/r^k \sim \text{LogN}(0, \sigma)$. Panel (a): \cite{Choi_2014} design vs LogN$(0, \sigma)$ benchmarks for $\sigma \in \{0.3, 0.5, 0.7, 0.9, 1.0, 1.1\}$, $T = 25$, $K$ up to $40$. Panel (b): \cite{BBC_2003} SMP design vs LogN$(0, 1)$ benchmark, $T = 20$, $K$ up to $25$. See footnotes for design rules. Both panels use 100 price replications and adaptive Monte Carlo with up to 50{,}000 budget-share draws per replication. Shaded bands are 95\% confidence intervals.}
    \label{fig:Design}
\end{figure}

Figure \ref{fig:Design} shows simulations of the  \cite{Choi_2014}  design (Panel \ref{fig:Choi})\footnote{The design is as follows. Inverse normalised prices (intercepts) are chosen initially randomly in the interval $[a,b]$. Then budget constraints where both intercepts are in $[a, \tfrac{b}{2}]$ are discarded. } and the extension of the  \cite{Sippel_1997}
 design by \cite{BBC_2003} (Panel \ref{fig:BBC}).\footnote{The design is as follows. Initially an inverse normalised price is drawn from the benchmark $\text{LogN}(0,1)$. A bundle is selected on that constraint from a uniform distribution. The next vector of inverse normalised prices is drawn from the benchmark distribution and scaled such that $\bm{r}_2\cdot\bm{x}_1 = \bm{r}_1\cdot\bm{x}_1$. The second bundle is then selected from the new budget constraint and so on.} 

Looking first at the fixed design we see that the design performs well against the $\text{LogN}(0,1)$ benchmark. It sits between the $A(K)$ curves for $\text{LogN}(0,0.5)$ and $\text{LogN}(0,0.7)$. At $K = 20$ the Choi design's Area is $0.60$, between $\text{LogN}(0,0.5)$ (Area $0.35$) and $\text{LogN}(0,0.7)$ (Area $0.83$); the $\text{LogN}(0,1)$ benchmark already has Area $0.99$ at $K = 20$. However, we also see that it does less well against inverse normalised price distributions with much smaller standard deviations. Looking at the adaptive design we see that the $A(K)$ curve for the SMP path lies above the benchmark. At $K = 10$ the SMP path's Area is $0.97$ against $0.77$ for the $\text{LogN}(0,1)$ benchmark; by $K = 20$ SMP reaches $0.9996$ against the benchmark's $0.99$. It appears that, whilst the SMP path is powerful conditional on a given sequential preference ordering, it is associated with large Areas and hard-to-reject tests of rational choice compared to the simple benchmark. The reason is that if one of the normalised price vectors is associated with ``extreme'' relative price difference, this can cause the following budget constraint to move outwards or inwards to a degree that reduces the extent to which budget constraints cross. Such designs are therefore best suited to environments with limited price variation, or where constraints are carefully sequenced.

Design choices materially affect empirical content, but even carefully constructed designs do not eliminate the dimensionality problem; designs effective in low dimensions may perform poorly in high dimensions.

\section{Concluding Remarks}
 
Revealed preference theory imposes nonparametric testable restrictions on observed behaviour. This paper studies how informative these restrictions are in high-dimensional choice environments. Using Selten’s Area measure, we show that for any fixed number of observations, the empirical content of GARP declines exponentially in the number of goods. As a result in high-dimensional settings almost all possible choices which satisfy budget constraints will appear rational. The mechanism underlying this result is a high-dimensional concentration phenomenon which appears in both the graph-theoretic and linear programming formulations. 

The simulation results show that this effect is not merely asymptotic but quantitatively important in empirically relevant environments. Calibrating the distribution of normalised prices to household scanner data, we find that empirical content deteriorates rapidly even at moderate numbers of goods. Increasing the number of observations slows this decline but does not prevent it. Similarly, realistic levels of price dispersion reduce the Area only modestly. Taken together, these findings indicate that standard empirical environments leave limited scope for revealed preference tests to reject behaviour inconsistent with GARP.

We also consider potential responses. Imposing additional structure, such as separability, reduces the effective dimensionality of the problem and can partially restore empirical content. Experimental design choices can likewise affect the extent to which budget constraints intersect and therefore the restrictiveness of the test. However, these approaches mitigate rather than eliminate the underlying effect. The loss of empirical content is a fundamental consequence of high dimensionality.

More broadly, the results highlight a general trade-off between flexibility and testability in nonparametric models of behaviour. As the dimensionality of the choice environment increases, the restrictions imposed by revealed preference theory become less informative. Revealed preference tests may therefore have limited empirical content, making it difficult to distinguish rational behaviour from arbitrary choice.

 \bibliography{paper.bib}

\section{Proofs}\label{s:proofs}

\subsection{Proof of Lemma~\ref{l:cycle_mean_inequality}}\label{a:cme_proof}
\begin{proof}Given their definition, on any edge $\ell$, the ratios of normalised prices $\{\rho_\ell^k\}_{k\in [K], \ell\in [L]}$ are positive reals satisfying telescoping: $\prod_{\ell=1}^L \rho_\ell^k =1$ for all $k\in [K]$. Write $v_\ell^k = \log \rho_\ell^k$, so that the telescoping property becomes $\sum_{\ell=1}^L v_\ell^k = 0$ for each $k$. By Jensen's inequality,
\begin{equation}
    \bar{\rho}_\ell = \frac{1}{K}\sum_{k=1}^K e^{v_\ell^k} \ge \exp\left(\frac{1}{K}\sum_{k=1}^K v_\ell^k\right),
\end{equation}
with equality if and only if $v_\ell^k$ is constant in $k$. Summing $\log \bar{\rho}_\ell$ over $\ell$ gives $\sum_{\ell} \log \bar{\rho}_\ell \ge \frac{1}{K}\sum_k \sum_\ell v_\ell^k = 0$, and exponentiating yields $\prod_{\ell=1}^L \bar{\rho}_\ell \ge 1$. Equality in the final inequality holds if and only if equality holds in Jensen's inequality for every $\ell$, which is equivalent to $v_\ell^k$, and hence $\rho_\ell^k$, being constant in $k$ for each $\ell$. Therefore, if $\rho_\ell^k$ is non-constant in $k$ for some $\ell$, at least one Jensen inequality is strict, so $\prod_{\ell=1}^L \bar{\rho}_\ell>1$. In that case $\max_{\ell\in[L]}\bar{\rho}_\ell>1$, since otherwise all $\bar{\rho}_\ell\le1$ and the product would be at most one.
\end{proof}

\subsection{Proof of Theorem~\ref{t:garp_high_dim}}\label{a:main_proof}

We first establish two lemmas. Lemma~\ref{l:strict_bounds} shows that~(\ref{ass:A1}) and~(\ref{ass:A2}) together imply $a<1<b$. Lemma~\ref{l:edge_prob_bound} provides a concentration bound showing that, under uniformly distributed budget shares, the probability that a weighted average of price ratios falls below 1 decays exponentially in $K$. The proof of the theorem then proceeds as follows. GARP is violated if and only if $G_K$ contains a directed cycle. For any cycle,~(\ref{ass:A2}) identifies an edge $\ell^*$ with $\bar{\rho}_{\ell^*}\ge 1+\varepsilon$. Lemma~\ref{l:edge_prob_bound} shows that such an edge forms with probability at most $\exp(-c_1K)$. A union bound over the finitely many cycles on $[T]$ completes the argument.

\begin{lemma}[manual-num=2, label={l:strict_bounds}][Strict bounds]
Under~(\ref{ass:A1}) and~(\ref{ass:A2}) of Theorem~\ref{t:garp_high_dim}, $a<1<b$.
\end{lemma}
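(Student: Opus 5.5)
We need to show $a<1<b$ under A1 and A2. A1 gives $a\le\rho_{ij}^k\le b$ for all $k$ and all $i\ne j$. The plan is to prove the two strict inequalities separately, using the telescoping structure of price ratios around a cycle for one side and A2 directly for the other.

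\emph{Upper bound $b>1$.} Take the 2-cycle $(i,j,i)$ on any two distinct observations; this exists since $T\ge2$. A2 supplies an $\varepsilon>0$ and an edge $\ell$ on this cycle with $\bar\rho_\ell\ge 1+\varepsilon$ for all large $K$. Each $\bar\rho_\ell$ is an average of the ratios $\rho_\ell^k\le b$, so $\bar\rho_\ell\le b$. Hence $b\ge 1+\varepsilon>1$.

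\emph{Lower bound $a<1$.} Here I use telescoping. For fixed $i\ne j$ and each $k$,
\[
\rho^k_{ij}\,\rho^k_{ji}=\frac{r_i^k}{r_j^k}\cdot\frac{r_j^k}{r_i^k}=1 .
\]
Take the edge $\ell=(i_\ell,i_{\ell+1})$ found above, with $\bar\rho_\ell\ge 1+\varepsilon$. Some good $k$ must have $\rho^k_\ell\ge 1+\varepsilon$, since an average cannot exceed every term. The reverse ratio for that good is then
\[
\rho^k_{i_{\ell+1}i_\ell}=\frac{1}{\rho^k_\ell}\le \frac{1}{1+\varepsilon}<1 .
\]
A1 applies to every ordered pair of distinct indices, including this reverse pair. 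So $a\le \rho^k_{i_{\ell+1}i_\ell}<1$.

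Combining the two bounds gives $a<1<b$. No step is a real obstacle. The only point needing care is that A1 bounds ratios in both directions, and that A2 holds for sufficiently large $K$ while A1 holds for all $K$. It therefore suffices to fix one large $K$ at which both conditions are in force.
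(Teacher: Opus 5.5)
Your proof is correct. It uses the same ingredients as the paper's: the reciprocity $\rho^k_{ij}\rho^k_{ji}=1$, the fact that (A1) bounds ratios for every ordered pair, and (A2). You arrange them differently, though. The paper argues by contradiction and symmetrically. If $a\ge 1$, then $\rho^k_{ij}\ge 1$ and $\rho^k_{ji}=1/\rho^k_{ij}\le 1$, while also $\rho^k_{ji}\ge a\ge 1$. So every ratio equals one, every Carli index equals one, and (A2) fails; the case $b\le 1$ is handled by the mirror argument. You argue directly and asymmetrically. Your $b>1$ follows from (A2) and the upper bound in (A1) alone, with no need for reciprocity, because a Carli index is an average of ratios bounded by $b$. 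Reciprocity enters only for $a<1$, applied to one good on the expensive edge. The paper's version uses (A2) only to exclude the degenerate configuration in which all ratios equal one. Yours extracts explicit bounds, $b\ge 1+\varepsilon$ and $a\le 1/(1+\varepsilon)$, which link the constants in (A1) to the dispersion margin in (A2). Your closing remark about fixing a single large $K$ is the right way to handle the fact that the edge attaining the maximum in (A2) may change with $K$.
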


\begin{proof}
The identity $\rho_{ij}^k\cdot \rho_{ji}^k = 1$ and~(\ref{ass:A1}) give $a\le \rho_{ij}^k \le b$ and $a\le 1/\rho_{ij}^k\le b$ for all $i\neq j$ and $k$. If $a\ge 1$, then $\rho_{ij}^k\ge 1$ and $\rho_{ji}^k = 1/\rho_{ij}^k\le 1$, but also $\rho_{ji}^k\ge a\ge 1$, forcing $\rho_{ij}^k=1$ for all $i,j,k$ and contradicting~(\ref{ass:A2}). Hence $a<1$; a symmetric argument gives $b>1$.
\end{proof}

\begin{lemma}[manual-num=3, label={l:edge_prob_bound}][Edge probability bound]
Let $\pmb{\rho} = (\rho^1,\dots,\rho^K) \in \mathbb{R}^K_{++}$ satisfy $a \le \rho^k \le b$ for all $k$, and suppose
\[
\bar{\rho} := \frac{1}{K}\sum_{k=1}^K \rho^k > 1.
\]
If $\bm{w} \sim \mathrm{Uniform}(\Delta_{K-1})$, then
\[
\mathbb{P}(\pmb{\rho}\cdot \bm{w} \le 1)
\le
\exp\!\left(
-\frac{K(\bar{\rho}-1)^2}{4(b-a)^2}
\right).
\]
\end{lemma}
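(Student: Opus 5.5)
The plan is to use the standard representation of the uniform distribution on the simplex by normalised exponentials. This turns the event $\pmb{\rho}\cdot\bm{w}\le 1$ into a lower-tail event for a sum of independent random variables, to which an exponential Chernoff bound applies. Let $E_1,\dots,E_K$ be i.i.d.\ $\mathrm{Exp}(1)$ and set $w^k=E_k/\sum_{m}E_m$. Then $\bm{w}\sim\mathrm{Uniform}(\Delta_{K-1})$, and since $\sum_m E_m>0$ almost surely,
\[
\{\pmb{\rho}\cdot\bm{w}\le 1\}=\Bigl\{\textstyle\sum_{k=1}^K(\rho^k-1)E_k\le 0\Bigr\}.
\]
The summands $Y_k:=(\rho^k-1)E_k$ are independent, with $\sum_k\mathbb{E}[Y_k]=K(\bar\rho-1)>0$. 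So the event is a deviation of size $K(\bar\rho-1)$ below the mean.

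First I dispose of a trivial case. If $a>1$, then $\pmb{\rho}\cdot\bm{w}\ge a>1$ on the simplex and the probability is zero. Otherwise $a\le 1<\bar\rho\le b$, which gives $|\rho^k-1|\le b-a$ for all $k$ and also $\bar\rho-1\le b-a$. Then, for $\lambda>0$, Markov's inequality applied to $e^{-\lambda\sum_k Y_k}$ gives
\[
\mathbb{P}\Bigl(\textstyle\sum_k Y_k\le 0\Bigr)\le \prod_{k=1}^K\mathbb{E}\,e^{-\lambda Y_k}=\prod_{k=1}^K\frac{1}{1+\lambda(\rho^k-1)},
\]
valid whenever $1+\lambda(\rho^k-1)>0$ for all $k$. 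Here I use the exponential moment generating function $\mathbb{E}e^{-sE}=1/(1+s)$ for $s>-1$.

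The main obstacle is that the $E_k$ are unbounded, so Hoeffding's lemma cannot be quoted directly. Instead I control the logarithm using the elementary inequality $\log(1+x)\ge x-x^2$ for $x\ge -\tfrac12$. Restricting to $\lambda(b-a)\le\tfrac12$, so that every $x_k=\lambda(\rho^k-1)\ge-\tfrac12$, this yields
\[
\log\mathbb{P}\le -\lambda\sum_k(\rho^k-1)+\lambda^2\sum_k(\rho^k-1)^2\le -\lambda K(\bar\rho-1)+\lambda^2K(b-a)^2 .
\]

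Finally I optimise over $\lambda$. The choice $\lambda^*=(\bar\rho-1)/(2(b-a)^2)$ gives exactly the exponent $-K(\bar\rho-1)^2/(4(b-a)^2)$. It remains to check admissibility: $\lambda^*(b-a)=(\bar\rho-1)/(2(b-a))\le\tfrac12$ because $\bar\rho-1\le b-a$. This completes the bound.
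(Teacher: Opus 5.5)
Your proposal is correct and follows the paper's proof essentially step for step: the exponential representation of $\mathrm{Dirichlet}(1,\dots,1)$, the Chernoff bound via the Laplace transform $\prod_k\bigl(1+\lambda(\rho^k-1)\bigr)^{-1}$, the inequality $\log(1+x)\ge x-x^2$, and the same optimiser $\lambda^*=(\bar\rho-1)/(2(b-a)^2)$. Two of your steps go beyond what the paper writes out. First, you dispose of the case $a>1$, which matters because the paper's step $\max_k|\rho^k-1|\le b-a$ silently requires $a\le 1$. Second, you verify that $\lambda^*(b-a)\le\tfrac12$, whereas the paper leaves the admissibility of its optimiser unstated in this lemma.
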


\begin{proof}
Since $\bm{w} \sim \mathrm{Uniform}(\Delta_{K-1}) = \mathrm{Dirichlet}(1,\dots,1)$, we may write
\[
w^k = \frac{Y_k}{S},
\qquad
Y_k \stackrel{\mathrm{iid}}{\sim} \mathrm{Exp}(1),
\qquad
S = \sum_{k=1}^K Y_k.
\]
Then $\pmb{\rho}\cdot \bm{w} \le 1$ if and only if
\[
Z := \sum_{k=1}^K (\rho^k-1)Y_k \le 0.
\]
Set $a_k = \rho^k - 1$. Since $\mathbb{E}[Y_k]=1$,
\[
\mathbb{E}[Z] = \sum_{k=1}^K a_k = K(\bar{\rho}-1) > 0.
\]

For any $t>0$ with $t a_k > -1$ for all $k$, Markov's inequality gives
\[
\mathbb{P}(Z \le 0)
=
\mathbb{P}(e^{-tZ}\ge 1)
\le
\mathbb{E}[e^{-tZ}]
=
\prod_{k=1}^K \mathbb{E}[e^{-t a_k Y_k}]
=
\prod_{k=1}^K (1+t a_k)^{-1},
\]
using the Laplace transform of $\mathrm{Exp}(1)$.

Let $M = \max_k |a_k| \le b-a$. For $|x|\le 1/2$, we use
\[
\log(1+x) \ge x - x^2.
\]
Hence, whenever $|t a_k|\le 1/2$,
\[
\log \mathbb{P}(Z\le 0)
\le
-\sum_{k=1}^K \log(1+t a_k)
\le
-t\sum_{k=1}^K a_k + t^2 \sum_{k=1}^K a_k^2
\le
-tK(\bar{\rho}-1) + t^2 K M^2.
\]
Optimising at $t^* = (\bar{\rho}-1)/(2M^2)$ yields
\[
\log \mathbb{P}(Z\le 0)
\le
-\frac{K(\bar{\rho}-1)^2}{4M^2}.
\]
Since $M \le b-a$, the stated bound follows.
\end{proof}

\begin{proof}[Proof of Theorem~\ref{t:garp_high_dim}]
The number of directed cycles on distinct vertices in $[T]$ is finite and given by $C_T = \sum_{L=2}^T \binom{T}{L}(L-1)!<\infty.$ Fix a directed cycle $C = (i_1,\ldots,i_L,i_1)$ on distinct vertices. A GARP violation occurs if and only if the revealed preference graph contains a directed cycle. For any such cycle, assumption~(\ref{ass:A2}) implies that there exists an edge $\ell^*$ such that $\bar{\rho}_{\ell^*} \ge 1+\varepsilon$. The event $\{i_\ell\to i_{\ell+1}\} = \{\pmb{\rho}_{i_\ell i_{\ell+1}}\cdot \bm{w}_{i_{\ell+1}}\le 1\}$ depends only on $\bm{w}_{i_{\ell+1}}$. Since the $\bm{w}_{i}$ are independent, the $L$ edge events are mutually independent:
\begin{equation}
    \mathbb{P}(C\text{ appears in }G_K) = \prod_{\ell=1}^L \mathbb{P}(\pmb{\rho}_{i_\ell i_{\ell+1}}\cdot \bm{w}_{i_{\ell+1}}\le 1).
\end{equation}
Applying Lemma~\ref{l:edge_prob_bound}:
\begin{equation}
    \mathbb{P}(C\text{ appears in }G_K)\le \exp\left(-\frac{K\varepsilon^2}{4(b-a)^2}\right) =: \exp(-c_1K).
\end{equation}
A union bound over all $C_T$ cycles gives $\mathbb{P}(\text{GARP violation})\le C_T\exp(-c_1K)$, and hence $A_K \ge 1 - C_T\exp(-c_1K)$.
\end{proof}

\subsection{Proof of Theorem~\ref{t:lp_vanishing}}\label{a:lp_proof}
The proof of Theorem~\ref{t:lp_vanishing} combines two ingredients: a concentration bound (Lemma~\ref{l:concentration}) showing that each LP coefficient $e_{ij}$ is exponentially unlikely to fall below its mean $\bar{\rho}_{ij}$, and a strict feasibility result (Lemma~\ref{l:strict_feasibility}) showing that the LP evaluated at the limiting coefficients admits a solution with uniform slack $\eta_0>0$. When the random coefficients stay within this slack---an event whose complement has probability at most $T(T-1)\exp(-c_2K)$---the limiting solution remains feasible for the random LP.

\begin{lemma}[manual-num=4, label={l:concentration}][Concentration of LP coefficients]
Under~(\ref{ass:A1}), for each $i\neq j$ and any $\delta > 0$,
\begin{equation}\label{e:conc_bound}
    \mathbb{P}(e_{ij} < \bar{\rho}_{ij} - \delta) \le \exp\!\left(-\frac{K\delta^2}{4(b-a+\delta)^2}\right).
\end{equation}
\end{lemma}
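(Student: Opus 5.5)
The bound should follow by rerunning the Chernoff argument behind Lemma~\ref{l:edge_prob_bound}, shifted by $\delta$. Fix $i\neq j$ and write $\pmb{\rho}=\pmb{\rho}_{ij}$, $\bm{w}=\bm{w}_j\sim\mathrm{Uniform}(\Delta_{K-1})$. Use the Dirichlet representation $w^k=Y_k/S$ with $Y_k\stackrel{\mathrm{iid}}{\sim}\mathrm{Exp}(1)$ and $S=\sum_k Y_k$. The event $\{e_{ij}<\bar{\rho}-\delta\}$ is then the same as
\[
Z:=\sum_{k=1}^K a_kY_k<0,\qquad a_k:=\rho^k-\bar{\rho}+\delta.
\]
This has exactly the form treated in Lemma~\ref{l:edge_prob_bound}, with the threshold $1$ replaced by $\bar\rho-\delta$. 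The mean is $\mathbb{E}[Z]=\sum_k a_k=K\delta>0$, because the deviations $\rho^k-\bar\rho$ sum to zero.

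The next step is to bound the coefficients. By (\ref{ass:A1}), both $\rho^k$ and $\bar\rho$ lie in $[a,b]$, so $|\rho^k-\bar\rho|\le b-a$. Hence
\[
M:=\max_k|a_k|\le b-a+\delta.
\]
This is the source of the $(b-a+\delta)^2$ in the denominator of \eqref{e:conc_bound}.

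Now apply the Chernoff step. For $t>0$ with $t a_k>-1$, Markov's inequality and the Laplace transform of $\mathrm{Exp}(1)$ give
\[
\mathbb{P}(Z<0)\le\prod_k(1+ta_k)^{-1}.
\]
When $|ta_k|\le 1/2$, the inequality $\log(1+x)\ge x-x^2$ yields
\[
\log\mathbb{P}(Z<0)\le -tK\delta+t^2KM^2.
\]
Choose $t^*=\delta/(2M^2)$. This gives the exponent $-K\delta^2/(4M^2)$, and substituting $M\le b-a+\delta$ yields \eqref{e:conc_bound}.

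The main point to check is that $t^*$ is admissible, namely $|t^*a_k|\le \delta/(2M)\le 1/2$. This needs $\delta\le M$, which may fail. The clean fix is to take $t=\min\{\delta/(2M^2),\,1/(2M)\}$ in general. Alternatively, set $M':=\max(M,\delta)\le b-a+\delta$; this still gives $|ta_k|\le1/2$ with $t=\delta/(2M'^2)$, and $-tK\delta+t^2KM^2\le -K\delta^2/(4M'^2)$. Using the bound $M'\le b-a+\delta$ in place of $M$ is exactly what makes the stated constant hold without a case split, and is presumably why $\delta$ appears inside the denominator. Finally, strict versus weak inequality in the event is immaterial, since $\{\pmb{\rho}\cdot\bm{w}=\bar\rho-\delta\}$ has measure zero.
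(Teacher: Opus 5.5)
Your proposal is correct and follows the paper's proof essentially step for step: the same exponential representation of the uniform Dirichlet, the same shifted coefficients $\rho^k_{ij}-\bar\rho_{ij}+\delta$ summing to $K\delta$, the same Laplace-transform Chernoff bound with $\log(1+x)\ge x-x^2$, and the same choice $t^*=\delta/(2M^2)$. The paper just takes $M_\delta=b-a+\delta$ from the start, which is automatically at least $\delta$. Your admissibility worry is actually vacuous, since $\max_k|a_k|\ge\frac{1}{K}\sum_k a_k=\delta$, so $\delta\le M$ always holds; your $\max(M,\delta)$ fix is therefore unnecessary but harmless.
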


\begin{proof}
Since $\text{Uniform}(\Delta_{K-1}) = \text{Dirichlet}(1,\ldots,1)$, we may write $w_j^k = Y_k / S$ with $Y_k \sim \text{Exp}(1)$ independent and $S = \sum_{k=1}^K Y_k$ \citep[Chapter~11, Theorem~4.1]{devroyeMultivariateDistributions1986}. Since $S > 0$ a.s., $e_{ij} < \bar{\rho}_{ij} - \delta$ if and only if
\begin{equation}
    Z := \sum_{k=1}^K \bigl(\rho_{ij}^k - \bar{\rho}_{ij} + \delta\bigr)\, Y_k \le 0.
\end{equation}
Setting $\alpha_k = \rho_{ij}^k - \bar{\rho}_{ij} + \delta$, we have $\sum_{k} \alpha_k = K\delta > 0$ and $|\alpha_k| \le (b-a)+\delta =: M_\delta$. For $\theta > 0$ with $\theta M_\delta \le 1/2$, Markov's inequality and $\mathbb{E}[e^{-\theta \alpha_k Y_k}] = (1+\theta \alpha_k)^{-1}$ give
\begin{equation}
    \mathbb{P}(Z\le 0) \le \prod_{k=1}^K (1+\theta \alpha_k)^{-1}.
\end{equation}
The bound $\log(1+x) \ge x - x^2$ for $|x|\le 1/2$ yields
\begin{equation}
    \log \mathbb{P}(Z\le 0) \le -\theta K\delta + \theta^2 K M_\delta^2.
\end{equation}
Setting $\theta^* = \delta/(2M_\delta^2)$ (which satisfies $\theta^* M_\delta \le 1/2$ since $\delta \le M_\delta$) gives $\log\mathbb{P}(Z\le 0) \le -K\delta^2/(4M_\delta^2)$.
\end{proof}

\begin{lemma}[manual-num=5, label={l:strict_feasibility}][Strict feasibility of the limiting LP]
Under~(\ref{ass:A1}) and~(\ref{ass:A2p}), define $\eta_0 = \eta/(2T)$. There exist $U_1^*,\ldots,U_T^*\in \mathbb{R}$ such that
\begin{equation}\label{e:tightened}
    U_j^* - U_i^* \le (\bar{\rho}_{ij} - 1) - \eta_0, \quad \text{for all } i,j\in [T], \; i\neq j.
\end{equation}
\end{lemma}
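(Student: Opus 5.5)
This is a system of difference constraints, so I will prove it with shortest-path potentials. Set tightened edge weights
\[
c_{ij} := (\bar{\rho}_{ij}-1) - \eta_0, \qquad i\neq j,
\]
on the complete directed graph on $[T]$. By the classical characterisation of difference constraints, the system $U_j - U_i \le c_{ij}$ is feasible if and only if the weighted digraph has no negative-weight directed cycle. When that holds, a solution is $U_j^* := \min_{P} \sum_{(u,v)\in P} c_{uv}$, the minimum over all directed paths $P$ ending at $j$, including the trivial empty path with value $0$. Equivalently, add a source joined to every vertex with weight $0$ and take shortest-path distances.

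The first step is to check that every directed cycle has nonnegative tightened weight. Take a cycle $(i_1,\ldots,i_L,i_1)$ on distinct vertices, $2\le L\le T$. Its tightened weight is
\[
\sum_{\ell=1}^L (\bar{\rho}_{i_\ell i_{\ell+1}}-1) - L\eta_0 \;\ge\; \eta - L\frac{\eta}{2T} \;\ge\; \eta - \frac{\eta}{2} \;=\; \frac{\eta}{2} \;>\; 0,
\]
using (\ref{ass:A2p}) for the sum and $L\le T$. Closed walks that are not simple decompose into simple cycles, so their weights are also positive. Hence no negative cycle exists.

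The second step is to verify that the shortest-path potentials are well defined and feasible. Because there are no negative cycles, the minimum over walks is attained on simple paths. There are finitely many of these, at most of length $T-1$, so $U_j^*$ is a finite real number. Under (\ref{ass:A1}) it is also bounded, by $(T-1)(b-1+\eta_0)$ in absolute value, since $a\le\bar\rho_{ij}\le b$ and $a<1<b$. For any $i\neq j$, extending an optimal path ending at $i$ by the edge $i\to j$ gives a path ending at $j$. Therefore $U_j^* \le U_i^* + c_{ij}$, which is exactly (\ref{e:tightened}).

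The only delicate point is the bookkeeping in the first step: the slack $\eta_0$ must be small enough that tightening every edge of the longest possible cycle, of length $T$, still leaves a nonnegative sum. The choice $\eta_0=\eta/(2T)$ is made for exactly this purpose. It also leaves spare margin $\eta/2$, though only nonnegativity is needed. For "sufficiently large $K$" I will simply fix $K$ large enough that (\ref{ass:A2p}) holds, because the lemma is a deterministic statement for each such $K$.
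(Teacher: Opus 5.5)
Your proof is correct and takes essentially the paper's route: tighten each edge by $\eta_0$, use (A2$'$) with $L\le T$ to get every simple cycle sum at least $\eta/2$, and take shortest-path potentials. The differences are cosmetic. The paper roots the potentials at vertex~1 and handles the case where $j$ already lies on the optimal path to $i$ by explicit truncation, whereas you use a zero-weight super-source and cover that case by noting that the minimum over walks equals the minimum over simple paths. Your side bound on $|U_j^*|$ is not needed, and its lower half does not follow from $a\le\bar\rho_{ij}$ alone; it needs $\bar\rho_{ij}\ge 1/b$, which comes from $\rho^k_{ij}=1/\rho^k_{ji}$.
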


\begin{proof}
Set $d_{ij} = \bar{\rho}_{ij} - 1 - \eta_0$. By~(\ref{ass:A2p}), every sequence of distinct indices $(i_\ell)_{\ell\in [L]}$ with $L\ge 2$ and $i_{L+1}:=i_1$ satisfies
\begin{equation}\label{e:cycle_sum}
    \sum_{\ell=1}^L d_{i_\ell i_{\ell+1}} = \sum_{\ell=1}^L (\bar{\rho}_{i_\ell i_{\ell+1}} - 1) - L\eta_0 \ge \eta - T\eta_0 = \frac{\eta}{2} > 0,
\end{equation}
since $L\le T$ and $\eta_0 = \eta/(2T)$. Fix $U_1^* = 0$ and for each $m\neq 1$ define
\begin{equation}
    U_m^* = \min\!\left\{\sum_{p=1}^{n-1} d_{j_p j_{p+1}} : j_1 = 1,\; j_n = m,\; j_1,\ldots,j_n\in [T] \text{ distinct}\right\},
\end{equation}
a minimum over a finite set. To verify $U_j^* \le U_i^* + d_{ij}$: let $j_1=1,\ldots,j_n=i$ be a sequence achieving $U_i^*$. If $j\notin \{j_1,\ldots,j_n\}$, appending $j_{n+1}=j$ gives a distinct sequence from~$1$ to~$j$ with sum $U_i^*+d_{ij}$, so $U_j^* \le U_i^* + d_{ij}$. If $j = j_q$ for some $q < n$, the subsequence $j_1,\ldots,j_q$ is a sequence of distinct indices from~$1$ to~$j$, so $U_j^* \le \sum_{p=1}^{q-1} d_{j_p j_{p+1}} = U_i^* - \sum_{p=q}^{n-1} d_{j_p j_{p+1}}$. Since $\sum_{p=q}^{n-1} d_{j_p j_{p+1}} + d_{ij} = \sum_{p=q}^{n} d_{j_p j_{p+1}}$ with $j_{n+1}=j=j_q$ is a cyclic sum, it is non-negative by~\eqref{e:cycle_sum}, giving $U_j^* \le U_i^* + d_{ij}$. 
\end{proof}

\begin{proof}[Proof of Theorem~\ref{t:lp_vanishing}]
Let $(U_1^*,\ldots,U_T^*)$ be the solution from Lemma~\ref{l:strict_feasibility}, and set $\lambda_i = 1$. We show that $(U^*,\lambda)$ is feasible for the random Afriat system whenever $e_{ij} \ge \bar{\rho}_{ij} - \eta_0$ for all $i\neq j$.

For each constraint $(i,j)$ with $i\neq j$: by~\eqref{e:tightened}, $U_j^* - U_i^* \le (\bar{\rho}_{ij} - 1) - \eta_0$. If $e_{ij} \ge \bar{\rho}_{ij} - \eta_0$, then
\begin{equation}
    e_{ij} - 1 \ge (\bar{\rho}_{ij} - 1) - \eta_0 \ge U_j^* - U_i^*,
\end{equation}
so the constraint $U_j^* - U_i^* \le \lambda_i(e_{ij}-1) = e_{ij}-1$ holds. By a union bound over all $T(T-1)$ pairs and Lemma~\ref{l:concentration} with $\delta = \eta_0$:
\begin{equation}
    \mathbb{P}(\text{LP infeasible}) \le \mathbb{P}\!\left(\exists\, i\neq j : e_{ij} < \bar{\rho}_{ij} - \eta_0\right) \le T(T-1)\exp\!\left(-\frac{K\eta_0^2}{4(b-a+\eta_0)^2}\right).
\end{equation}
Setting $c_2 = \eta_0^2/(4(b-a+\eta_0)^2) > 0$, where $\eta_0 = \eta/(2T)$, completes the proof.
\end{proof}
\newpage
\section{The distribution of normalised prices}
\renewcommand{\thefigure}{B\arabic{figure}}
\setcounter{figure}{0}

We calibrate the distribution of prices using the ACNielsen Homescan Panel (\cite{Aguiar_Hurst_2007}), which provides detailed household-level price and expenditure data. Figure \ref{f:price_distr} shows the results of fitting a normal distribution to the log inverse normalised prices, $\log(\tfrac{1}{r^k})$. The top right panel shows the fitted distribution against the histogram of the data, the top left shows the q-q plot of the quantiles of the empirical and fitted distributions; bottom left shows the empirical and fitted CDF and bottom right shows the p-p plot of theoretical cumulative probabilities from the fitted distribution and the data. We find that  $\tfrac{1}{r^k} \sim \text{LogN}(5.69,1.19)$ is a good description of the real-world data in \cite{Aguiar_Hurst_2007}.
\begin{figure}[h]
    \caption{The distribution of inverse normalised prices}
    \label{f:price_distr}
    \centering
    \includegraphics[width=0.6\textwidth]{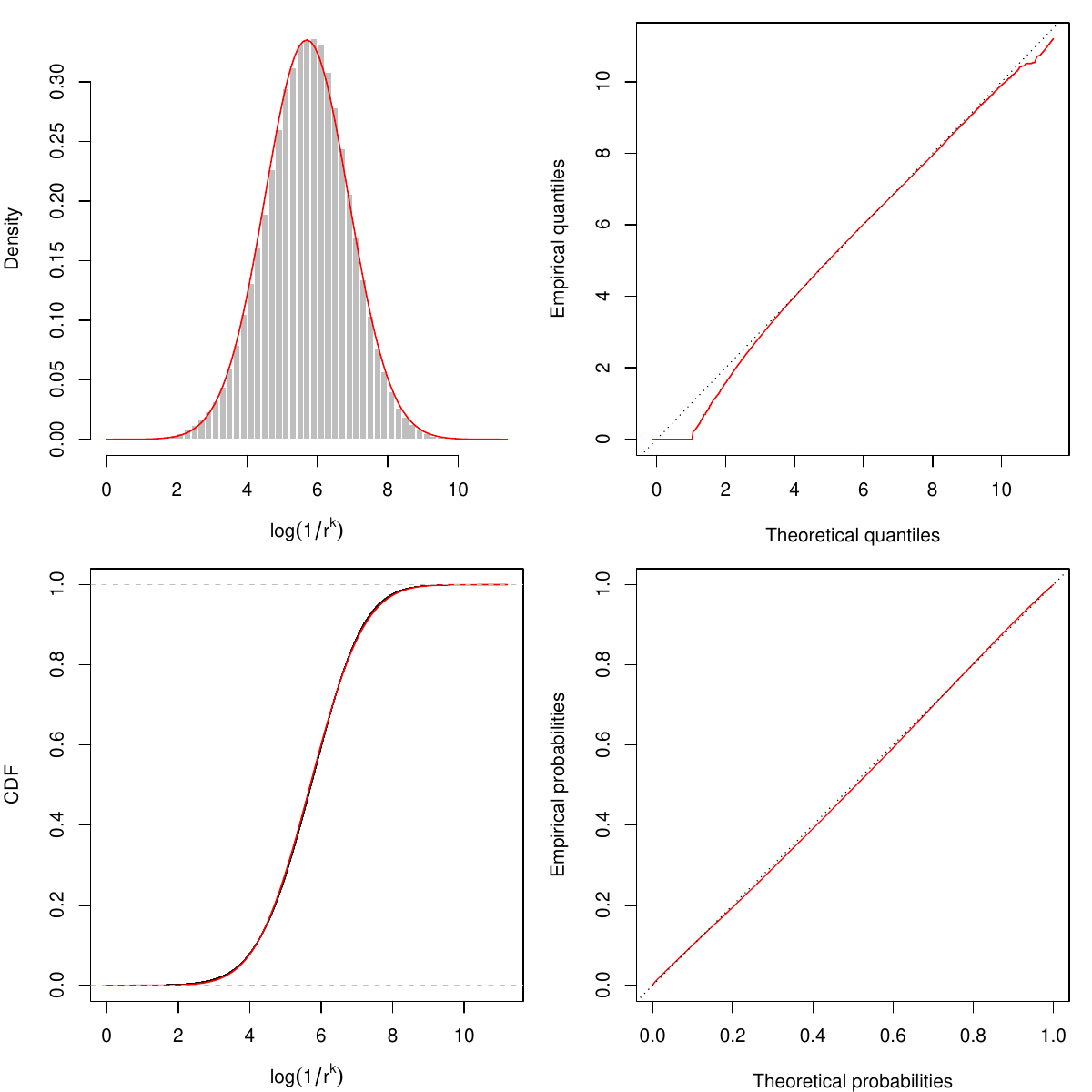}
\end{figure}

\end{document}